 \patchcmd\Gread@eps{\@inputcheck#1 }{\@inputcheck"#1"\relax}{}{}
\newcommand {\bE} {\mathbb{E}}
\newcommand {\bX} {\mbox{\boldmath $X$}}
\newcommand{\calL}{{\cal L}}
\newcommand{\calP}{{\cal P}}
\newcommand{\calQ}{{\cal Q}}
\newcommand{\calX}{{\cal X}}
\newcommand{\calY}{{\cal Y}}
\newcommand{\be}{\begin{equation}}
\newcommand{\ee}{\end{equation}}
\newcommand{\beqna}{\begin{eqnarray}}
\newcommand{\eeqna}{\end{eqnarray}}
\DeclareFontFamily{U}{mathx}{\hyphenchar\font45}
\DeclareFontShape{U}{mathx}{m}{n}{
      <5> <6> <7> <8> <9> <10>
      <10.95> <12> <14.4> <17.28> <20.74> <24.88>
      mathx10
      }{}
\DeclareSymbolFont{mathx}{U}{mathx}{m}{n}
\DeclareMathSymbol{\bigtimes}{1}{mathx}{"91}
\newcommand{\abs}[1]{\left|#1\right|}
\newtheorem{theorem}{Theorem}
\newtheorem{proof}{Proof}
\newtheorem{lemma}{Lemma} 
\newtheorem{corollary}{Corollary}
\newtheorem{remark}{Remark}
\newcommand{\p}[1]{\left(#1\right)}
\newcommand{\pp}[1]{\left[#1\right]}
\newcommand{\ppp}[1]{\left\{#1\right\}}
\newcommand{\salman}[1]{\textcolor{black}{#1}}
\newcommand{\changed}[1]{\textcolor{black}{#1}}
\newcommand{\subalign}[1]{%
	\vcenter{%
		\Let@ \restore@math@cr \default@tag
		\baselineskip\fontdimen10 \scriptfont\tw@
		\advance\baselineskip\fontdimen12 \scriptfont\tw@
		\lineskip\thr@@\fontdimen8 \scriptfont\thr@@
		\lineskiplimit\lineskip
		\ialign{\hfil$\m@th\scriptstyle##$&$\m@th\scriptstyle{}##$\crcr
			#1\crcr
		}%
	}
}
\begin{document}

\title{Why Botnets Work: Distributed Brute-Force Attacks Need No Synchronization}

\author{\IEEEauthorblockN{Salman Salamatian,}
\and
\IEEEauthorblockN{Wasim Huleihel,} 
\and
\IEEEauthorblockN{Ahmad Beirami,} 
\and
\IEEEauthorblockN{Asaf Cohen,}
\and
\IEEEauthorblockN{Muriel M\'edard}
\IEEEoverridecommandlockouts
\IEEEcompsocitemizethanks{
\IEEEcompsocthanksitem
This paper was presented in part at the 2017 IEEE International Symposium on Information Theory \cite{huleihel2017guessing}. Salman Salamatian, Wasim Huleihel, and Muriel M\'edard are with the department of Electrical Engineering and Computer Science, MIT, Cambridge MA (salmansa@mit.edu, wasimh@mit.edu, beirami@mit.edu, medard@mit.edu). A. Beirami was with the department of Electrical Engineering and Computer Science, MIT, Cambridge, MA. He is currently with EA Digital Platform -- Data \& AI, Electronic Arts, Redwood City, CA (ahmad.beirami@gmail.com). Asaf Cohen is with the department of Electrical Engineering, Ben-Gurion University of the Negev, Israel (coasaf@bgu.ac.il).
}}

\parskip 3pt

\maketitle

\begin{abstract}
	In September 2017, McAffee Labs quarterly report \cite{quaterlyreport} estimated that brute force attacks represent 20\% of total network attacks, making them the most prevalent type of attack ex-aequo with browser based vulnerabilities. These attacks have sometimes catastrophic consequences, and understanding their fundamental limits may play an important role in the risk assessment of password-secured systems, and in the design of better security protocols. While some solutions exist to prevent online brute-force attacks that arise from one single IP address, attacks performed by botnets are more challenging. In this paper, we analyze these distributed attacks by using a simplified model. Our aim is to understand the impact of distribution and asynchronization on the overall computational effort necessary to breach a system. Our result is based on Guesswork, a measure of the number of queries (guesses) required of an adversary before a correct sequence, such as a password, is found in an optimal attack. Guesswork is a direct surrogate for time and computational effort of guessing a sequence from a set of sequences with associated likelihoods. 
We model the lack of synchronization by a worst-case optimization in which the queries made by multiple adversarial agents are received in the worst possible order for the adversary, resulting in a min-max formulation. 
We show that, even without synchronization, and for sequences of growing length, the asymptotic optimal performance is achievable by using randomized guesses drawn from an appropriate distribution. Therefore, randomization is key for distributed asynchronous attacks. In other words, asynchronous guessers can asymptotically perform brute-force attacks as efficiently as synchronized guessers.
\end{abstract}

\section{Introduction}\label{sec:intro}
 From online banking \cite{hole2006case} and bitcoin wallets \cite{bitcoinFortune}, to secure shell (SSH), file transfer protocol (ftp), and telnet servers \cite{owens2008study}, and passing by governmental institutions \cite{bruteforceParlement}, brute-force attacks have shown to be one of the major threats to network security. Despite the computational burden on the attacker, brute-force attacks are prevalent. This can be explained through multiple points of view. First, passwords are often weaker than what they ought to be, meaning that attackers can hope to find the correct password well before they query a significant portion of the possible password strings. Next, attacks through huge networks of compromised computers (botnets) are now more common, giving access to significant computational resources for the attacker. More critically, these botnets help to disguise the attack by distributing it.  Indeed, a main solution to the threat of online brute-force attacks is to setup a system that detects and prevents too many queries from any one user, as determined by IP addresses. As such, an attacker which uses only a single IP address would be limited to a fixed number of guesses. In recent years, however, this defense was circumvented by using massive botnets, each bot querying potential passwords. In this situation, it is hard to detect legitimate users in the crowd of illegitimate attackers. These attacks come with a cost, namely, the attack is now distributed across thousands, sometimes millions of computers, each with limited computational power and synchronization tools.
 
As a first step to understand the impact of synchronization, we put forth a simplified mathematical model for passwords and brute-force attacks. We believe that the intuition gained from this model is informative and helpful in assessing the security of systems under brute-force attacks. In particular, we study Guesswork, a measure of the number of password queries (guesses) that an adversary would have to perform before finding the correct one.
Guesswork is best explained through the following simple game: Alice selects a secret discrete random variable $X$ taking values in a finite set $\calX$, and distributed according to $P_X$. 
Then, Bob, who does not see the realization of $X$ but does know $P_X$, presents to Alice a successive sequence of guesses $\hat{X}_1,\hat{X}_2$, and so on.
\salman{For each guess $\hat{X}_i$, Alice checks whether it is the correct symbol $X$.} If the answer is affirmative, Alice says ``yes", and the game ends. Otherwise, the game continues, and Alice examines subsequent guesses. 

This game has a simple interpretation in the context of security. 
Consider a setup where a system is protected using a password $X$, that Alice draws at random from a distribution $P_X$ (or is drawn by nature and revealed to Alice, as it happens in several important password-protection tools which generate passwords, e.g. iCloud keychain). 
An adversary, Bob, wishes to breach the system by performing a brute-force attack, or, in other words, by guessing the password $X$. 
The brute-force attack on the system would consist of, first, producing a list of all possible password strings $\mathcal{X}$ ordered from most, to least likely with respect to $P_X$, and then exhausting the list of passwords one by one until successfully guessing the correct password.
In order to understand the security of such a system under these attacks, it is necessary to evaluate the computational effort required by Bob to breach the system. 
To achieve this, it is reasonable to quantify the number of queries before the correct password is found, which we shall denote by $G^*(X)$, and in particular, its $\rho$-th moment, i.e. $\mathbb{E}[G^*(X)^\rho]$. The number of queries is a direct surrogate for the computational effort that Bob must accomplish, and the lower this quantity, the more vulnerable the system is to brute-force attacks. 

\begin{figure}
    \centering
    \includegraphics[scale=.4]{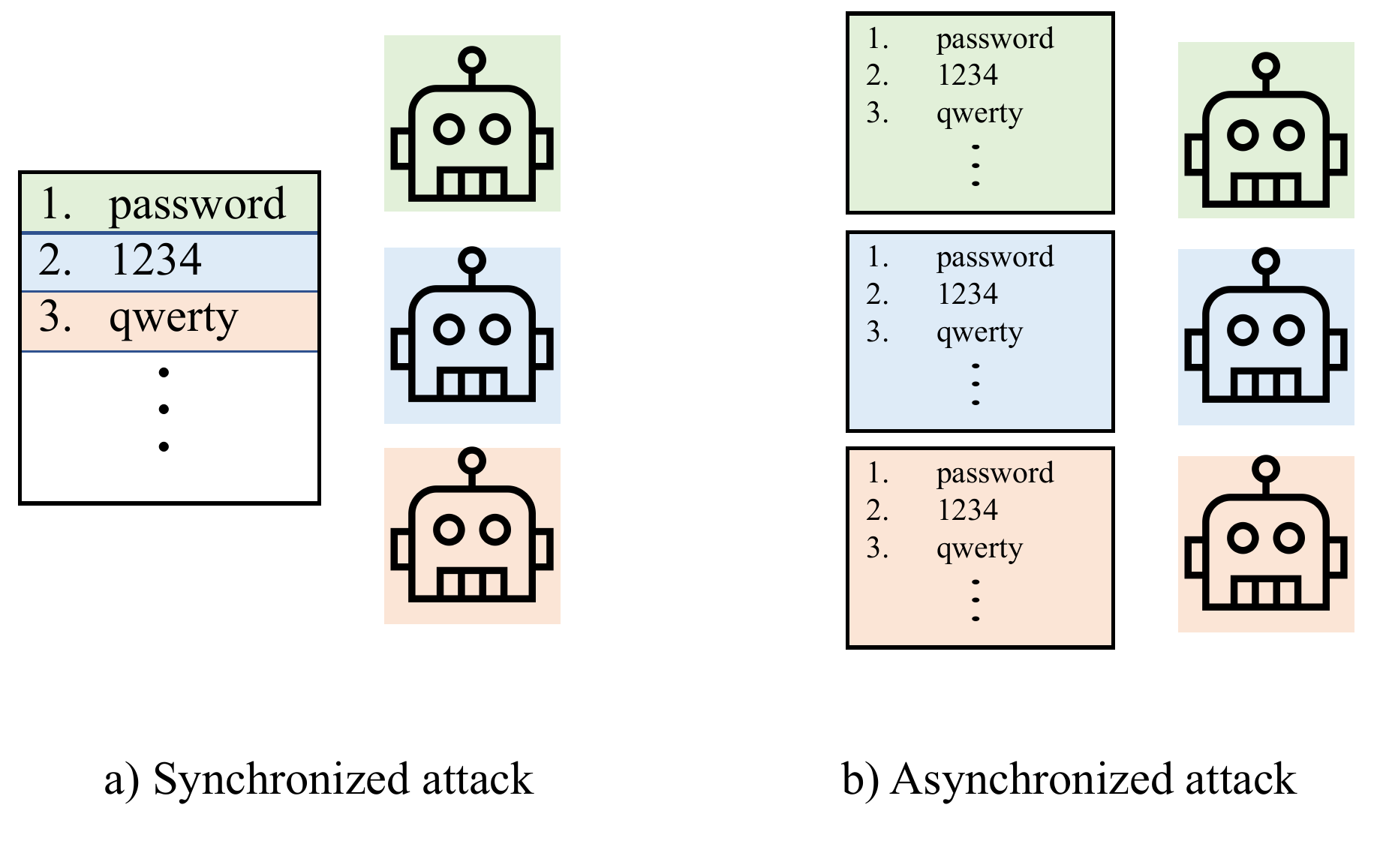}
    \caption{In a synchronized attack, the bots query from the password-list in a specified order. In the asynchronous attack, they do not know the order in which the queries will be sent. Our solution will consist at drawing guesses according to some distribution, instead of querying passwords one-by-one.}
    \label{fig:intro}
\end{figure}

If multiple adversarial agents (we shall use adversary and agents interchangeably) coordinate their attack, the system will be compromised as soon as any of them succeeds.
Moreover, the individual computational effort of each adversary is reduced, while the total number of queries remains the same.
Indeed, an optimal strategy here would consist of having each agent query the most-likely password that has not been queried by any of the other agents. 
Since this strategy reduces to querying as a group from the optimal list, the average number of queries completed by each agent is thus reduced by a factor of the number of agents, with respect to the case where a single agent queries alone. 
This requires the agents to be able to synchronize their queries, that is, there must be a knowledge of an ordering in which the agents make guesses. 

However, in many practical scenarios the adversarial agents are completely distributed and have limited communication with each other. 
One prime example is botnets, in which agents are often oblivious to the actions taken by other agents, and may have limited access to shared memory or synchronization tools. 
Owing to constraints of the physical computers in which these bots run, the speed, latency, and reliability of these agents is heterogeneous --- thus, perfect synchronization is unlikely. Note that even if a central agent distributes lists of possible guesses to the bots, such that the lists form a partition of all guesses, making sure no guess is repeated, the lack of synchronization may still render the process sub-optimal. We illustrate an example of synchronized and asynchronous attack in Fig~\ref{fig:intro}.
At one extreme, a complete lack of synchronization can be modeled by a worst-case optimization, in which the guesses of each agent come in the worst possible order. 

The goal of this paper is to study how much the lack of synchronization, as described above, might affect the overall number of queries that are made until the game ends. 
We discuss why deterministic strategies cannot perform well in this paradigm, while on the other hand, a simple randomized strategy in which all the guesses are drawn i.i.d. from a certain distribution asymptotically achieves the same optimal performance of a synchronous attack when guessing secrets that are long sequences drawn according to some types of distribution. 
This optimal guessing distribution is non-trivial, and, perhaps surprisingly, it is not the original password generating distribution $P_X$. It is a tilted distribution from $P_X$, where the tilt exponent depends on the moment of guesswork of interest.
In other words, distributed and asynchronous agents can adopt a strategy for which the asymptotic number of total queries sent before a system breach is optimal, regardless of the ordering in which these queries are received, but this distribution is only optimal for a given moment of guesswork, and not optimal universally across all moments. 

For the sake of simplicity of discussion, we have made the following assumptions on the password generation process, as well as on the brute-force attack itself.
\begin{enumerate}
    \item Passwords are assumed to be strings of given length $n$. Note that in some applications, the brute-force attack takes place on private key of some fixed size, in which case the length of the secret key is often known.
    \item Passwords are assumed to be strings whose characters are generated i.i.d. from a distribution $P_X$.\footnote{We briefly mention generalizations to passwords generated according to an irreducible stationary Markov Chain in Remark~\ref{remark:markov} in Section~\ref{sec:synchronization}.}
    \item The common goal of the agents is to guess one given password, or string, \changed{a so called targeted attack}. In practice, there might be multiple accounts which undergo attacks simultaneously.
    \item The agents have no additional information about the users and may  construct guesses based solely on $P_X$\footnote{\changed{We also briefly discuss the presence of side-information during the attack, which an adversary might use to modify the distribution of the potential passwords at the end of Section~\ref{sec:synchronization}.}}.
\end{enumerate}
We believe that some of these assumptions could be relaxed and generalized using techniques from the literature, as discussed below. \changed{In addition, the i.i.d. setting, and the resulting asymptotic results, can be used as guidelines in designing systems even if the real system violates the memoryless assumption. For example, such results can be used to choose the minimal length of a password to secure a system.} Despite these assumptions, the insights gained from the model we study shed light on the robustness of brute-force attacks to asynchronization. To illustrate this claim, we have shown our results on an extract of the Adobe Leaked password dataset (see \cite{AdobeLeak} for a description of the dataset). In particular, we extracted the $10^4$ most likely passwords from a subset of 10 millions passwords in the data, and restricted our study to those passwords. We investigate the guesswork when the correct password is drawn according to the distribution $P_X$ as computed on this restricted sample of the data. We show in Figure~\ref{fig:adobe} the performance of a randomized strategy when using the optimal guessing distribution versus the naive distribution $P_X$, both in terms of expected number of guesses and in terms of probability of making less than a fixed number of guesses. Note that the true distribution $P_X$ performs well if one wishes to make only a small number of guesses, but eventually takes longer to reach a high probability. This is due to less frequent passwords, which are barely ever queried if guesses are drawn according to $P_X$. The guessing distribution which optimizes the average number of guesses increases the probability of querying the less likely passwords, as those passwords represent the main computational burden on the adversary when they occur.
 
\begin{figure}
    \centering
    \begin{subfigure}{.45\textwidth}
    \includegraphics[width=.9\textwidth]{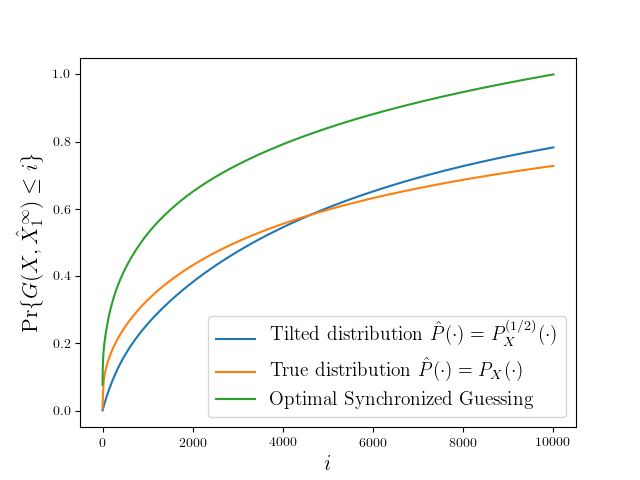}
    \caption{Probability of finding the password in fewer than $i$ queries. In a synchronized attack, the passwords has to be found after at most $|\mathcal{X}| = 1e4$ queries. The blue and orange line correspond to i.i.d. guesses according to the distribution $\hat{P}$. }
    \end{subfigure} \quad
    \begin{subfigure}{.45\textwidth}
    \includegraphics[width=.9\textwidth]{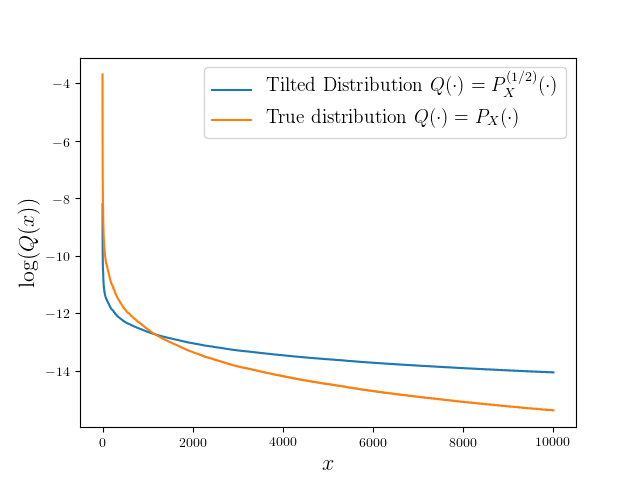}
    \caption{Log-probability mass function. Notice how the tilted distribution gives more weight to less likely symbols, as they correspond to the symbol which are the most costly for password guessing.}
    \end{subfigure}
    \caption{Experiments on a subset of Adobe Leak password data (only $10^4$ unique passwords kept). Despite the heavy tail of the distribution, a randomized strategy with some tilt improves the log expected number of guesses from 9.2 when using the naive distribution, to 8.8 when using the optimal tilt}
    \label{fig:adobe}
\end{figure}

\textbf{Related Work: }The problem of a cipher with a guessing wiretapper was considered in \cite{MerhavArikan}. The problem of guessing subject to distortion and constrained Shannon entropy were investigated in \cite{ArikanMerhav} and \cite{Beirami}, respectively. The above results have been generalized to ergodic Markov chains \cite{MaloneSullivan} and a wide range of stationary sources \cite{PfisterSullivan}. The problem of guessing under source uncertainty was investigated in \cite{Sundaresan}. The analysis of the guessing exponents, using large deviations theory, was considered in \cite{HanawalSundaresan}. In \cite{ChristiansenDuffy} it was shown that the guesswork satisfies a large deviation property and the rate function was characterized. Guessing a sequence given an erased version of the sequence was studied in \cite{christiansen2013guessing}, where the interplay between the large-deviations of the erasure process, and of the sequence generation were characterized. A brute-force attack where adversaries are interested in multiple passwords is discussed in \cite{christiansen2015multi}. A distributed attack model based on password hints was proposed in \cite{bracher2017guessing} and evaluated under guesswork metrics, and a wiretap system under guessing guarantees was studied in \cite{merhav1999shannon}. A geometric characterization of the guesswork was established in \cite{Beirami2} and expanded in \cite{beirami2018characterization}. Guesswork under an entropy budget was studied in \cite{rezaee2017guesswork}. Connections between guesswork and one-to-one coding were explored in \cite{kosut2017asymptotics}. \changed{Finally, applications of guesswork \cite{boztas2014renyi} to cryptographic guessing was studied in \cite{boztas2014renyi}, where oblivious or memoryless guessers were studied. The results of \cite{boztas2014renyi} are non-asymptotic, but very much related to our setting, as optimal i.i.d. guessing strategies both in terms of number of guesses and in terms of probability of success are studied, and a distributed attack scenario is also envisaged.}

\changed{The statistics of password generation were studied in \cite{wang2016implications,wang2017zipf,blocki2018economics,bonneau2012science}. {Password frequencies have been shown to follow closely variants of the Zipf's law distribution. In particular, the so-called \emph{CDF-Zipf's law} model introduced in \cite{wang2016implications,wang2017zipf} is a modification of the Zipf's law which captures the frequencies of passwords, both for very frequent passwords, and the tails, as exhibited by the close empirical fit to multiple password datasets (see \cite{wang2017zipf,wang2016implications,blocki2018economics}). Note that an adversary can benefit greatly from the the non-uniformity of these distributions to design more powerful brute-force attacks. Indeed, Guesswork, and other related notions of security related to brute-force attacks are also studied in \cite{wang2016implications,blocki2018economics,bonneau2012science}.} A special case of brute-force attack is given by \emph{targeted attacks}, in which the adversary uses the personal information of an user in his guessing strategy, see e.g. \cite{wang2016targeted}. Works such as \cite{wang2018security,das2014tangled} empirically demonstrate the threat of these targeted attacks, as most users chose their passwords according to some personal information which an adversary might have easy access to (e.g. birthdays, names of family members, locations, or simply password reuse) .}

\textbf{Main Contributions:} We define a min-max formulation that models a worst case asynchronous attack from the attacker's perspective, and show that a randomized strategy in which each guess is drawn i.i.d. from a certain distribution achieves the same asymptotic performance (in the length of the password sequence $n$) as an optimal synchronized attack. This optimal distribution is non-trivial; performing guesses according to the distribution from which the password was generated yields a strategy that is exponentially worse than the optimal guessing distribution. In fact, the optimal choice is a tilted distribution, where the tilt parameter is chosen depending on the moment of guesswork to be optimized. We also discuss optimal strategies when the benchmark is to maximize the probability of success of an attack with a fixed number of overall queries, and show that an i.i.d. guessing strategy again has optimal performance asymptotically. The optimal distribution is again a tilted distribution, where the tilt depends on the number of queries allowed. Together these results indicate that there is no loss in performance (asymptotically in $n$) when performing an asynchronous attack. 

The paper is organized as follows. In Section~\ref{sec:notation}, we establish some notation and provide a brief background on the guessing problem. We discuss the impact of synchronization under the number of guesses in Section~\ref{sec:synchronization} and then under the probability of a system breach with a fixed number of queries in Section~\ref{sec:prob_succ}.

\textbf{Previous Publication:} In a conference publication \cite{huleihel2017guessing},
we studied the problem of a memoryless guesser, and derived
some of the technical results appearing in the present paper
for binary sources, and integer moments $ \rho > 0$. Although
not directly related to botnets and asynchronous attacks, the
technical results introduced in that paper are at the core of the
analysis of an asynchronous distributed attack. In particular,
Lemma~2, Theorem~2 and Theorem~3 are present, although
restricted to binary sources and integer moments, in \cite{huleihel2017guessing}.
However, the connection to the asynchronous botnet problem
is novel and made explicit here. In particular, the min-max
formulation of Section III, along with Theorem~1, numerical
results, and extensions of our previous results to general finite
alphabets and non-integer moments distinguish this work from
our previous publication. Finally, Corollary~2 which characterizes
the loss in universality of distributed asynchronous
attacks, is also novel.

\section{Notation and Background}\label{sec:notation}

Throughout this paper, scalar random variables (RVs) will be denoted by capital letters, their sample values will be denoted by the respective lower case letters, and their alphabets will be denoted by the respective calligraphic letters, e.g. $X$, $x$, and $\mathcal{X}$, respectively. 
We also use the notation $\mathbf{X}_n$ to designate the sequence of RVs $(X_1,\ldots,X_n)$, and may drop the subscript when the size of the sequence considered is clear from the context, e.g., $\mathbf{X}$.
The set of all $n$-vectors with components taking values in a certain finite alphabet, will be denoted by the same alphabet superscripted by $n$, e.g., $\mathcal{X}^n$. Probability distributions will be denoted by the letters $P$ and $Q$, with subscripts that denote the names of the random variables involved along with their conditioning, if applicable, following the customary notation rules in probability theory. 
For example, $Q_{XY}$ stands for a generic joint distribution $\{Q_{XY}(x,y),\;x\in\mathcal{X},\;y\in\mathcal{Y}\}$, $P_{Y|X}$ denotes the matrix of single-letter transition probabilities, and so on.

The expectation operator will be denoted by $\mathbb{E}\ppp{\cdot}$, and when we wish to make the dependence on the underlying distribution $Q$ clear, we denote it by $\mathbb{E}_Q\ppp{\cdot}$. The Kullback-Liebler (KL) divergence between two probability measures $P$ and $Q$ will be denoted by $D(P||Q)$.
For entropies, it will be convenient to write explicitly the distributions, e.g. $H(P_X)$.
When dealing with binary random variables we may use the short-hand notation $H(p)$, where it is understood that it refers to the Shannon entropy over a Bernouilli distribution parametrized by $p$. A similar notation will be used for divergences, e.g., $D(p_1\|p_2)$.

For a given vector $\mathbf{x}_n$, let $\hat{P}_{\mathbf{x}_n}$ denote the empirical distribution, that is, the vector $\{\hat{P}_{\mathbf{x}_n}(x),~x\in{\mathcal{X}}\}$, where $\hat{P}_{\mathbf{x}_n}(x)$ is the relative frequency of the letter $x$ in $\mathbf{x}_n$. 
Let $T(P_X)$ denote the type class associated with $P_X$, that is, the set of all sequences $\mathbf{x}_n$ for which $\hat{P}_{\mathbf{x}_n}=P_X$. 

The cardinality of a finite set $\mathcal{A}$ will be denoted by $\abs{\mathcal{A}}$, its complement will be denoted by $\mathcal{A}^c$. 
The probability of an event $\mathcal{E}$ will be denoted by $\Pr\left\{\mathcal{E}\right\}$. 
For two sequences of positive numbers, $\left\{a_n\right\}$ and $\left\{b_n\right\}$, the notation $a_n\doteq b_n$ means that $\left\{a_n\right\}$ and $\left\{b_n\right\}$ are of the same exponential order, i.e., $n^{-1}\log a_n/b_n\to0$ as $n\to\infty$, where logarithms are defined with respect to (w.r.t.) the natural basis, that is, $\log\left(\cdot\right) = \ln(\cdot)$. 
Finally, for a real number $x$, we denote $[x]_+ \triangleq \max\{0,x\}$.

\noindent \textbf{Guessing Functions and Strategies: } A (possibly randomized) guessing strategy is a sequence $\hat{X}_{1}^\infty \triangleq \{\hat{X}_k(P_X) : k \geq 1\}$, where $\hat{X}_k(P_X) \in \mathcal{X}$, is independent of the realization $X$ but may depend on $P_X$. 
In other words, $\hat{X}_{1}^\infty$ is the list of guesses the attacker will use one after the other when trying to guess $X$.
The corresponding guessing function, $G(X,\hat{X}_1^\infty)$, defined as
\begin{align}
G(X,\hat{X}_1^\infty) \triangleq \inf \left\{k \geq 1 : \hat{X}_k(P_X) = X \right\},
\end{align} 
represents the number of queries before reaching $X$. 
The $\rho$-th moment of the number of guesses is thus given by $\mathbb{E}[ G(X,\hat{X}_1^\infty)^\rho ]$, where the expectation is taken over the distribution $P_X$ and the randomness inherent in the guessing strategy $\hat{X}_{1}^\infty$. 
The $\rho$-th moment \emph{guesswork} of a source $X\sim P_X$ is given by
\begin{align}
	\min_{\hat{X}_1^\infty}\mathbb{E}\left[G(X, \hat{X}_1^\infty)^\rho\right],
\end{align}
where the minimization is over all guessing strategies. 
In particular, the first moment, i.e. $\rho = 1$, corresponds to the average number of guesses that an adversary would have to perform before guessing the correct $X$.\footnote{Although the most relevant moment to consider for practical purposes is the expectation, i.e. $\rho = 1$, we consider here a more general quantity as to be consistent with existing literature on guesswork.}
It was shown in \cite{Messay} that, without any constraint on the set of possible guessing strategies, the optimal guessing strategy is obtained by ordering the symbols in $\mathcal{X}$ by decreasing order of $P_X$-probabilities, with ties broken arbitrarily, resulting in a deterministic strategy $\{\hat{x}_k(P_X) : k \geq 1\}$.
The resulting guessing function, denoted by $G^*(X)$, represents the position of $X$ in the optimal list, i.e. the list of symbols ordered from most likely to least likely. \footnote{Note that the optimal strategy $G^*(X)$ is deterministic, therefore the dependence on $\hat{X}_1^\infty$ is dropped in the notation. } 
The problem of bounding the expected number of guesses was investigated in \cite{Arikan}. Specifically, among other things, it was shown that for any $\rho\geq0$, and any guessing function $G(\cdot)$,
\begin{align}
\bE\pp{G(X)^\rho}\geq(1+\log\abs{\calX})^{-\rho}\pp{\sum_{x\in\calX}P_X(x)^{\frac{1}{1+\rho}}}^{1+\rho}.
\end{align}
On the contrary, the optimal guessing function, satisfies\footnote{An improved bound by a factor of 2 was reported in \cite{Boztas0}.}
\begin{align}
\bE\pp{G^*(X)^\rho}\leq\pp{\sum_{x\in\calX}P_X(x)^{\frac{1}{1+\rho}}}^{1+\rho}.
\end{align}
Finally, letting $\mathbf{X} = (X_1,X_2,\ldots,X_n)$ be a sequence of independent and identically distributed (i.i.d.) random variables over a finite set, and letting $G^*(\mathbf{X})$ denote the optimal guessing function of a realization of $\mathbf{X}$, it was shown that \cite[Proposition 5]{Arikan}
\begin{align} 
E_\rho(P_X) \triangleq \lim_{n\to\infty}\frac{1}{n}\log\bE\pp{G^*(\mathbf{X})^\rho} = \rho\cdot H_{\frac{1}{1+\rho}}(X_1),\label{ArikanResult}
\end{align}
where $H_\alpha(X)$ is the R\'{e}nyi entropy of order $\alpha$ ($\alpha>0$, $\alpha\neq1$), defined as
\begin{align}
H_{\alpha}(X)\triangleq \frac{1}{1-\alpha}\log\pp{\sum_{x\in\calX}P_X(x)^\alpha}.
\end{align}
Note that the function $E_\rho(P_X)$ simply quantifies the exponential growth of the guesswork, as $n \to \infty$. We note that \eqref{ArikanResult} gives an \emph{asymptotic} operational characterization/meaning to R\'{e}nyi entropy of order $0\leq\alpha\leq1$.

\section{Asynchronous Brute-Force Attack} \label{sec:synchronization}

In this section, we discuss synchronization when multiple agents aim to breach a secured system. 
Recall that we say that distributed agents are synchronized if they know in which order every agent's queries will be received by Alice. 
In this case, they can query from the optimal list as a group, \emph{i.e.}, the first query received is the most likely symbol, etc. In other words, full synchronization means they can all share a single (optimal) list, and a \emph{pointer} to this list advancing after each new guess.
As a result, the total number of queries sent is the same as the optimal single agent guesswork, namely, \eqref{ArikanResult} is achieved, while the individual computational burden on each agent is reduced since the queries are divided among agents. 
Further, even if the number of adversaries grows exponentially\footnote{Note that in practice, the number of agents usually needs to grow since most secured systems include a mechanism which blocks IP addresses after a given number of password attempts. Thus, if a single agent can only make $k$ queries, there must be at least $\lceil |\mathcal{X}|^n/k \rceil$ agents to guarantee that a password of length $n$ will be found. } with the length of the password $n$, the total number of queries remains the same\footnote{\changed{Note that in this work we use the total number of queries as the main metric for computational effort, as opposed to e.g. \cite{boztas2014renyi} where the average number of guesses \emph{per agent} is characterized.}}. 

Instead, if agents do not know in which order the queries are delivered, they must adopt a strategy which performs well under any such ordering. 
In particular, we shall adopt a worst-case approach in which the goal is to minimize the number of queries in the worst ordering. 
Specifically, let $\mathbf{X}$, an i.i.d. sequence of length $n$ generated from $P_X$, be the sequence to be guessed, and let $\{\hat{\mathbf{X}}^{(a)}_k : k \geq 1\}$ be the strategy of agent $a \in \mathcal{A}$, where $\mathcal{A}$ is a, possibly
infinite, countable set. 
Again, we shall be interested in the regime where $|\mathcal{A}|$ grows at least exponentially fast with $n$, and the goal is to characterize number of queries made in total. 
We let the permutation $\pi: \mathbb{N}^+ \to \mathcal{A} \times \mathbb{N}^+$ denote the ordering in which the queries are received, i.e., $\pi(i) = (a_i,k_i)$ means that the $i$-th query received is $\hat{X}_{k_i}^{(a_i)}$.
Denote by $\Pi$ the set of all such possible orderings. 
Under an ordering $\pi$, Alice receives the sequence of queries $\pi(\hat{\mathbf{X}}_1^\infty) \triangleq \{\hat{\mathbf{X}}_{k_i}^{(a_i)}: i \geq 1  \}$. 
Note that this permutation allows reordering of guesses of a given agent $a \in \mathcal{A}$ which may be received in any arbitrary order.
For some fixed strategies $\{\hat{\mathbf{X}}_k^{(a)} : k \geq 1\}$, the worst ordering in terms of guesswork is thus given by
\begin{align}
\sup_{\pi \in \Pi} \mathbb{E} \ppp{G(\mathbf{X}, \pi(\hat{\mathbf{X}}_1^\infty))^\rho }. \label{eq:worst_case}
\end{align}
The goal of the agents is to minimize the worst-case number of queries, or, in other words, solve the min-max problem
\begin{align}
	\inf_{\{\hat{X}^{(a)}_k , k \geq 1\} \text{ for } a \in \mathcal{A}} \  \sup_{\pi \in \Pi} \  \mathbb{E} \ppp{G(\mathbf{X}, \pi(\hat{\mathbf{X}}_1^\infty))^\rho }. \label{eq:min_max}
\end{align}
The main result of this section, presented below, characterizes the asymptotic exponent of \eqref{eq:min_max}, as $n \to \infty$. The proof of this result, along with the associated lemmas, are given after some discussion.
\begin{theorem}\label{thm:min_max}
	For $\mathbf{X}_n$ an i.i.d. sequence according to $P_X$, and $\{\hat{\mathbf{X}}_k^{(t)}, k \geq 1 \}$ sequences of guesses which are independent over $a \in \mathcal{A}$, we have the following
	\begin{align}
	&\lim_{n\to\infty} \frac{1}{n} \log \left(\inf_{\{\hat{\mathbf{X}}^{(t)}_k : k \geq 1\}} \; \sup_{\pi \in \Pi} \;\mathbb{E}\ppp{G(\mathbf{X}_n, \pi(\hat{\mathbf{X}}_1^\infty))^\rho}\right) \nonumber\\
	&\ \ \ \ \ \ \ = \lim_{n\to\infty} \frac{1}{n} \log \mathbb{E}\ppp{G^*(\mathbf{X}_n)^\rho} \nonumber \\ 
	&\ \ \ \ \ \ \ = \rho \cdot H_\frac{1}{1+\rho}(X) .
	\end{align}
\end{theorem}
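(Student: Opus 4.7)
The second equality of the theorem is exactly Arikan's asymptotic result \eqref{ArikanResult}, so the task reduces to establishing the first equality. The plan is to sandwich the min-max quantity between matching asymptotic lower and upper bounds.

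\emph{Lower bound.} For any fixed collection of agent strategies $\{\hat{\mathbf{X}}_k^{(a)}\}_{a\in\mathcal{A},\, k\geq 1}$, the trivial inequality $\sup_{\pi}\geq\inf_{\pi}$ gives
\begin{equation*}
\inf_{\{\hat{\mathbf{X}}_k^{(a)}\}}\,\sup_{\pi\in\Pi}\,\mathbb{E}\!\left[G(\mathbf{X}_n,\pi(\hat{\mathbf{X}}_1^\infty))^\rho\right] \;\geq\; \inf_{\{\hat{\mathbf{X}}_k^{(a)}\}}\,\inf_{\pi\in\Pi}\,\mathbb{E}\!\left[G(\mathbf{X}_n,\pi(\hat{\mathbf{X}}_1^\infty))^\rho\right].
\end{equation*}
Jointly choosing the per-agent strategies and the ordering $\pi$ is equivalent to designing a single ordered list of guesses for $\mathbf{X}_n$, so the right-hand side equals $\mathbb{E}[G^*(\mathbf{X}_n)^\rho]$; invoking \eqref{ArikanResult} yields the lower bound $\rho H_{1/(1+\rho)}(X)$ in the exponent.

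\emph{Upper bound via a randomized i.i.d.\ strategy.} I propose that each agent $a$ draws each guess $\hat{\mathbf{X}}_k^{(a)}$ independently from the product distribution $Q^n$, where the tilted marginal is $Q(x)=P_X(x)^{1/(1+\rho)}/Z$ with $Z=\sum_x P_X(x)^{1/(1+\rho)}$. The crucial structural property is exchangeability: for every $\pi\in\Pi$ the permuted stream $\pi(\hat{\mathbf{X}}_1^\infty)$ remains i.i.d.\ $Q^n$, so the inner $\sup_\pi$ is vacuous. Conditional on $\mathbf{X}_n=\mathbf{x}$, the hitting time is geometric with parameter $Q^n(\mathbf{x})$, and a uniform bound $\mathbb{E}[\mathrm{Geo}(p)^\rho]\leq C_\rho\,p^{-\rho}$ for all $p\in(0,1]$ (with $C_\rho$ depending only on $\rho$, e.g.\ via weak convergence of $pG$ to $\mathrm{Exp}(1)$ as $p\to 0$) gives
\begin{equation*}
\mathbb{E}\!\left[G(\mathbf{X}_n,\pi(\hat{\mathbf{X}}_1^\infty))^\rho\right] \;\leq\; C_\rho \sum_{\mathbf{x}} P_X^n(\mathbf{x})\,Q^n(\mathbf{x})^{-\rho} \;=\; C_\rho\!\left(\sum_{x} P_X(x)\,Q(x)^{-\rho}\right)^{\!n} \;=\; C_\rho\,Z^{n(1+\rho)},
\end{equation*}
where the last equality uses $P_X(x)Q(x)^{-\rho}=Z^\rho P_X(x)^{1/(1+\rho)}$. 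Taking $n^{-1}\log$ and $n\to\infty$ yields $(1+\rho)\log Z = \rho H_{1/(1+\rho)}(X)$, which matches the lower bound.

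\emph{Expected main obstacle.} The lower bound is essentially combinatorial. The heart of the argument is the upper bound, and two ingredients deserve care. First, the uniform geometric-moment bound $\mathbb{E}[\mathrm{Geo}(p)^\rho]\leq C_\rho p^{-\rho}$ must hold for arbitrary real $\rho>0$ and $p\in(0,1]$; this is a genuine extension beyond the integer-$\rho$, binary-alphabet analysis of \cite{huleihel2017guessing}, and requires control of both the small-$p$ regime (via the exponential limit) and the bounded-away-from-zero regime. Second, one should verify via H\"older (or a Lagrange-multiplier argument) that $Q \propto P_X^{1/(1+\rho)}$ is indeed the tilt minimizing $\sum_x P_X(x)Q(x)^{-\rho}$, since this also justifies the discussion that the naive choice $Q=P_X$ yields an exponentially worse rate. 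I expect the non-integer moment analysis on a general finite alphabet to be the primary technical departure from the prior binary/integer work.
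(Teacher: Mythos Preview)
Your overall architecture matches the paper exactly: the lower bound is the trivial ``any ordering of any strategies cannot beat the optimal list'' observation (Lemma~\ref{lem:converse}), and the upper bound is obtained by letting every agent draw i.i.d.\ from the tilted distribution $\hat P^*_\rho \propto P_X^{1/(1+\rho)}$, then using exchangeability to kill the $\sup_\pi$. Where you diverge is in the technical lemma that controls the cost of the i.i.d.\ strategy. You bound the $\rho$-th moment of a geometric variable directly, $\mathbb{E}[\mathrm{Geo}(p)^\rho]\le C_\rho p^{-\rho}$, and then sum $\sum_{\mathbf{x}} P_{\mathbf{X}}(\mathbf{x}) Q^n(\mathbf{x})^{-\rho}$; the paper instead replaces $G^\rho$ by the binomial surrogate $V_\rho=\binom{G+\rho-1}{\rho}$, for which the negative-binomial identity $\sum_{m\ge 1}\binom{m+\rho-1}{m-1}y^{m-1}=(1-y)^{-\rho-1}$ gives the \emph{exact} formula $\mathbb{E}[V_\rho]=\sum_x P_X(x)\hat P(x)^{-\rho}$ (Lemma~\ref{th:1}), and the asymptotics follow from additivity of R\'enyi entropy (Corollary~\ref{cor:th1}). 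Your route is more elementary and suffices for the exponent in Theorem~\ref{thm:min_max}; the paper's route buys a non-asymptotic identity that also gives an exact operational meaning to $H_{1/(1+\rho)}$ for a single letter. Note also that your uniform bound $\mathbb{E}[\mathrm{Geo}(p)^\rho]\le C_\rho p^{-\rho}$ does need a short justification beyond the $p\to 0$ limit (e.g., via $(1-p)^{k-1}\le e^{-p(k-1)}$ and a Riemann-sum comparison with $\int_0^\infty x^\rho e^{-x}\,dx$), but this is routine and your identification of it as the main technical point is accurate.
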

Note that guesswork measures the \emph{total number of guesses made by the agents.} Thus it is clear that with full synchronization among the agents this value will not depend on $|\mathcal{A}|$. In a sense, dependence on $|\mathcal{A}|$ for a certain scheme would indicate a \emph{lack of synchronization}, as it would suggest that queries are repeated by the agents. Surprisingly, Theorem~\ref{thm:min_max} states that even under a worst-case assumption, there exist a strategy under which the guesswork does not depend on $|\mathcal{A}|$ and is similar to the fully synchronous case.
The above result and \eqref{ArikanResult} show that \emph{synchronization is not necessary to achieve the asymptotic optimal guessing performance}. 
This can be equivalently formulated by an achievability strategy, and a converse. The converse result is trivial, as the performance of the synchronized strategy $\mathbb{E}\ppp{G^*(\mathbf{X})}$ upper bounds \eqref{eq:min_max}.

\begin{lemma}[Converse]\label{lem:converse}
    For any strategy $\hat{\mathbf{X}}^\infty$, 
    \begin{align}
    \inf_{\{\hat{X}^{(t)}_k , k \geq 1\} \text{ for } a \in \mathcal{A}} \  \sup_{\pi \in \Pi} \  \mathbb{E} \ppp{G(\mathbf{X}, \pi(\hat{\mathbf{X}}_1^\infty))^\rho} \geq \mathbb{E}\ppp{G^*(\mathbf{X})}. \end{align}
\end{lemma}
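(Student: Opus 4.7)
The plan is to reduce the min-max lower bound to the standard (single-agent) optimality of the sorted-by-decreasing-probability list established in \cite{Messay}, by noting that, whatever per-agent strategies are used and whatever permutation $\pi$ is applied, the combined stream of queries delivered to Alice is itself just one single (randomized) guessing sequence, independent of $\mathbf{X}$, and hence cannot outperform $G^*(\mathbf{X})$.

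First, I would fix an arbitrary collection of (possibly randomized) strategies $\{\hat{\mathbf{X}}_k^{(a)} : k \geq 1\}_{a \in \mathcal{A}}$ and an arbitrary ordering $\pi \in \Pi$ with $\pi(i) = (a_i,k_i)$. Setting $\tilde{\mathbf{Y}}_i \triangleq \hat{\mathbf{X}}_{k_i}^{(a_i)}$, the sequence $\tilde{\mathbf{Y}}_1^\infty$ is an $\mathcal{X}^n$-valued process that, by the modeling assumptions of Section~\ref{sec:synchronization}, depends only on $P_X$ and on $\pi$, and therefore is independent of the realization of $\mathbf{X}$. Consequently $\tilde{\mathbf{Y}}_1^\infty$ is a valid single-agent randomized guessing strategy in the sense of Section~\ref{sec:notation}, and
\begin{align}
G(\mathbf{X}, \pi(\hat{\mathbf{X}}_1^\infty)) \;=\; \inf\{i \geq 1 : \tilde{\mathbf{Y}}_i = \mathbf{X}\},
\end{align}
with any repetitions in $\tilde{\mathbf{Y}}_1^\infty$ only inflating this count.

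Next, I would invoke the single-agent optimality from \cite{Messay}: among all deterministic guessing lists, the list obtained by ordering $\mathcal{X}^n$ by decreasing probability under the distribution of $\mathbf{X}$ minimizes every moment $\bE[G^\rho]$, attaining the value $\bE[G^*(\mathbf{X})^\rho]$. To extend this to the randomized strategy $\tilde{\mathbf{Y}}_1^\infty$, I condition on the joint internal randomness of the agents together with $\pi$; for every realization of that randomness, the conditional guessing list is deterministic and has $\rho$-th moment at least $\bE[G^*(\mathbf{X})^\rho]$, and averaging preserves the inequality, yielding
\begin{align}
\bE\ppp{G(\mathbf{X}, \pi(\hat{\mathbf{X}}_1^\infty))^\rho} \;\geq\; \bE\ppp{G^*(\mathbf{X})^\rho}.
\end{align}

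Finally, since the right-hand side depends neither on $\pi$ nor on the chosen strategies, taking $\sup_{\pi \in \Pi}$ followed by $\inf$ over strategy tuples on the left preserves this inequality, which is the statement of the lemma. There is no serious obstacle here; the only point worth emphasizing is that the optimality of $G^*$ has to be applied to a \emph{randomized} (and possibly repetition-containing) strategy, which is precisely what the conditioning step above handles.
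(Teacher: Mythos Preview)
Your argument is correct and matches the paper's approach: the paper simply declares the converse ``trivial'' on the grounds that any ordered stream of guesses is itself a single guessing strategy, hence cannot outperform $G^*$. Your write-up is more careful than the paper's one-line justification --- in particular the conditioning step that extends Massey's deterministic-list optimality to randomized (and possibly repeating) strategies is a point the paper glosses over --- but the underlying idea is identical.
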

We now turn to finding an appropriate strategy which would match this converse bound.
Let us first examine a naive solution to this problem. 
Consider the strategy which consists in letting each agent construct the optimal list and query it individually, that is $X_1^{(a)}$ is the most likely symbol for all $a \in \mathcal{A}$, $X_2^{(a)}$ the second most likely symbol, etc. 
It is easy to see that \eqref{eq:worst_case} would evaluate to a quantity which grows with the number of agents $|T|$.
Indeed, many queries are duplicated, and thus the overall number of queries grows with $|\mathcal{A}|$, without even reducing the computational burden on each adversary since they all must query the same password strings.
Note that this remains true if one considers a less stringent worst-case analysis, by for example, letting the guesses of each of the agent to be consistent among themselves, i.e. the permutation does not change the relative order of the guesses of each agent.

If instead the agents agree on a partition of the guesses before the attack, in a way such that no two guesses are repeated, then the correct password is queried by one unique agent. Again, it is easy to see that the worst-case analysis yields a quantity which grows with $|\mathcal{A}|$, even though it cannot grow beyond $|\mathcal{X}|^n$, as every unique password is queried at most once. In particular, if $|\mathcal{A}| = |\mathcal{X}|^n$, then the worst-case analysis achieves its upper-bound. Note that these observations are not only an artifact of the worst-case analysis, but rather a consequence of the deterministic nature of the queries.

This motivates us to study randomized strategies. 
In particular, we consider guesses, which are randomly and independently drawn according to a specific distribution, independent from each other, and identically distributed. 
We then study this optimal distribution in terms of the expected moments of guesswork.
Consider first a scalar $X \in \mathcal{X}$, generated from $P_X$.
We let $\{\hat{X}^{(a)}_k, k \geq 1\}$ be an i.i.d. process with respect to $\hat{P}(\cdot)$, for all $a \in \mathcal{A}$. 
For a given $\rho > 0$, we define the quantity
\begin{align}
V_\rho(X,\hat{X}_1^\infty) & \triangleq {G(X,\hat{X}_1^\infty) + \rho - 1 \choose \rho}, 
\end{align}
where ${x \choose y}$ is the generalized binomial coefficient defined in terms of the Gamma function $\Gamma(\cdot)$, i.e.
\begin{align}
{x \choose y} = \frac{\Gamma(x + 1)}{\Gamma(y + 1)\Gamma(x - y + 1)}.
\end{align}
In particular, $V_1(X,\hat{X}_1^\infty) =G(X,X_1^\infty)$. The motivation for this definition of $V_\rho(X,\hat{X}_1^\infty)$ will be made clear in the proof of Lemma~\ref{th:1}, where it allows us to compute a particular infinite sum neatly. Note that for large $G(X,\hat{X}_1^\infty)$ and fixed integer $\rho$, Stirling's approximation of the binomial coefficient directly gives $V_\rho(X,\hat{X}_1^\infty) \approx G(X,\hat{X}_1^\infty)^\rho / \rho!$, therefore $V_\rho(X,\hat{X}^\infty_1)$ approximates the behavior of the guesswork moment $G(X,\hat{X}_1^\infty)^\rho$, up to some factor.

We are interested in the following optimization problem
\begin{align}
\bE\{V^*_\rho(X,\hat{X}_1^\infty)\}\triangleq\inf_{\hat{P}\in\calP}\bE\{V_\rho(X,\hat{X}_1^\infty)\},\label{optimization}
\end{align}
where $\calP$ is the probability simplex and $\{\hat{X}_k : k \geq 1 \}$ is generated i.i.d. from $\hat{P}$. We let $\hat{P}^*_\rho$ designate the minimizer. The following Lemma is the main ingredient in proving an achievability and thus Theorem~\ref{thm:min_max}.
\begin{lemma}\label{th:1} For any  $\rho\geq1$,
	\begin{align}
	\log\bE\{V^*_\rho(X,\hat{X}_1^\infty)\} &= \rho \cdot H_{\frac{1}{1+\rho}}(X),
	\end{align}
	and for any $x\in\calX$,
	\begin{align}
	\hat{P}^*_\rho(x) = \frac{P_X(x)^{\frac{1}{1+\rho}}}{\sum_{x'\in\calX}P_X(x')^{\frac{1}{1+\rho}}}.\label{tiltedDis}
	\end{align}
\end{lemma}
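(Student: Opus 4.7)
My plan is to compute the objective in closed form, then optimize over the simplex, and finally identify the result with the R\'enyi entropy.

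First, I would observe that since $\{\hat{X}_k\}_{k\geq 1}$ are i.i.d. draws from $\hat{P}$ independently of $X$, conditional on $X=x$ the guessing time $G(X,\hat{X}_1^\infty)$ is geometrically distributed with success probability $\hat{P}(x)$, i.e.\ $\Pr\{G=k\mid X=x\}=(1-\hat{P}(x))^{k-1}\hat{P}(x)$ for $k\geq 1$. The reason the binomial coefficient appears in the definition of $V_\rho$ is precisely that it turns the conditional expectation into a negative-binomial generating function. Specifically, using the generalized binomial series
\[
\sum_{j=0}^\infty \binom{j+\rho}{\rho} y^j \;=\; \frac{1}{(1-y)^{\rho+1}}, \qquad |y|<1,
\]
with $y=1-\hat{P}(x)$, I would shift the index and obtain
\[
\bE\{V_\rho(X,\hat{X}_1^\infty)\mid X=x\}=\hat{P}(x)\sum_{j=0}^{\infty}\binom{j+\rho}{\rho}(1-\hat{P}(x))^{j}=\hat{P}(x)^{-\rho}.
\]
Taking the expectation over $X\sim P_X$ yields the clean expression
\[
\bE\{V_\rho(X,\hat{X}_1^\infty)\}=\sum_{x\in\calX}P_X(x)\,\hat{P}(x)^{-\rho}.
\]

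Second, I would minimize this over the probability simplex. A Lagrangian argument with multiplier $\lambda$ for $\sum_x \hat{P}(x)=1$ gives the stationarity condition $-\rho P_X(x)\hat{P}(x)^{-\rho-1}=\lambda$, so $\hat{P}(x)\propto P_X(x)^{1/(1+\rho)}$; the objective is strictly convex in $\hat{P}$ on the open simplex (each term $P_X(x)\hat{P}(x)^{-\rho}$ is convex for $\rho\geq 1$), and it tends to $+\infty$ at the boundary, so this stationary point is the unique minimizer, proving \eqref{tiltedDis}. Alternatively, one can obtain the same bound and equality condition directly from H\"older's inequality applied to the pair of exponents $(1+\rho,(1+\rho)/\rho)$, which avoids convexity bookkeeping.

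Finally, substituting $\hat{P}^*_\rho$ back into the objective and letting $S\triangleq\sum_{x'\in\calX}P_X(x')^{1/(1+\rho)}$, I would compute
\[
\bE\{V^*_\rho(X,\hat{X}_1^\infty)\}=S^{\rho}\sum_{x\in\calX}P_X(x)^{1-\rho/(1+\rho)}=S^{\rho}\cdot S=S^{\rho+1}.
\]
Taking logarithms and recalling that $H_{1/(1+\rho)}(X)=\frac{1+\rho}{\rho}\log S$ identifies the value as $\rho\cdot H_{1/(1+\rho)}(X)$, finishing the proof. The only non-routine step is the telescoping computation in step one: the definition of $V_\rho$ via the generalized binomial coefficient is tailored exactly so that the conditional expectation collapses to $\hat{P}(x)^{-\rho}$; once that identity is in hand, everything else is a straightforward constrained optimization and a change-of-variables into R\'enyi-entropy form.
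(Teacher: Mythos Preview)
Your proposal is correct and follows essentially the same route as the paper: compute the conditional expectation via the negative-binomial series to obtain $\bE\{V_\rho\}=\sum_x P_X(x)\hat{P}(x)^{-\rho}$, then minimize over the simplex by a Lagrangian/convexity argument, and identify the optimum with the R\'enyi entropy. The only cosmetic difference is that the paper derives the series identity $\sum_{m\geq 1}\binom{m+\rho-1}{\rho}y^{m-1}=(1-y)^{-\rho-1}$ by upper negation of the binomial coefficient via the Gamma-function recursion, whereas you quote the generalized binomial series directly; your mention of H\"older's inequality as an alternative to the Lagrangian is an extra observation not in the paper.
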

Before providing the proof of Lemma~\ref{th:1} we briefly discuss our result. First, we note that contrary to \eqref{ArikanResult}, the above result provides an \emph{exact} operational meaning for R\'enyi entropy $H_\alpha(X)$ of order $\alpha>0$. It should be mentioned here that a similar interpretation for $H_{1/2}(X)$ was reported in \cite{Boztas1,HanawalSundaresan2,boztas2014renyi}. Also, we see that the optimal guessing distribution \eqref{tiltedDis} is simply the tilted distribution of $P_X$ of order $1/(1+\rho)$. It should be emphasized that, since the function $f(x) = x^{1/1+\rho}$ is monotone, creating an optimal list according to $\hat{P}_X$ yields the exact same list as if done according to $P_X$, see e.g. \cite{beirami2018characterization}. However, the list of guesses chosen i.i.d. according to $\hat{P}_X$ will be different from the one if guesses are made i.i.d. according to $P_X$. Indeed, letting $\hat{P}(x)=P_X(x)$ gives
$$
\log\bE\{G(X,\hat{X}_1^\infty)\} = \log\abs{\calX},
$$
which could be much worse than $\log\bE\{V^*_1(X,\hat{X}_1^\infty)\} = H_{1/2}(X)$. Namely, when one is allowed only to guess passwordds according to a certain distribution, independently, and without a list, then using the original distribution is strictly sub-optimal, and the tilted distribution should be used. This result is related to similar results from the source-coding literature in which a tilted distribution also appears as the solution of an optimization where longer codewords are penalized exponentially (see e.g. \cite{campbell1965coding,blumer1988renyi}). Finally, note that the result is not asymptotic. In particular, the randomized strategy can be used over an alphabet $\mathcal{X}$ where each $x \in \mathcal{X}$ corresponds to a password. This result is thus relevant to dictionary attacks, where queries are drawn according to a dictionary of possible passwords, and suggests that distributed dictionary attacks should use a guessing distribution which is a tilted version of the true distribution.

\begin{proof}[Proof of Lemma~\ref{th:1}]
First, note that given $X$, $G(X,\hat{X}_1^\infty)$ is a geometric random variable, and for $k\geq1$,
\begin{align}
\Pr\{G(X,\hat{X}_1^\infty)=k\}= \sum_{x\in\calX}P_X(x)(1-\hat{P}(x))^{k-1}\hat{P}(x).\nonumber
\end{align}
Then, for any $\rho > 0$, we have
\begin{align}
&\bE\{V_\rho(X,\hat{X}_1^\infty)\} = \sum_{m=1}^\infty {m + \rho - 1 \choose m - 1} \Pr\{G(X,\hat{X}_1^\infty)=m\}\nonumber\\
&= \sum_{x\in\calX}P_X(x)\hat{P}(x)\sum_{m=1}^\infty {m + \rho - 1 \choose m - 1}(1-\hat{P}(x))^{m-1}\nonumber.
\end{align}
In the following, we calculate the second summation term in the r.h.s. of the last equality. This is equivalent to calculating
$$
\sum_{m=1}^\infty{m + \rho - 1 \choose \rho}y^{m-1}.
$$
Note that, using the identity $\Gamma(x+1) = x\Gamma(x)$ recursively, we get that
\begin{align}
    \frac{\Gamma(m + \rho)}{\Gamma(\rho + 1)} &= (m+\rho - 1)\cdot(m + \rho - 2)\cdots(\rho+1) \nonumber\\
    & = (-1)^{m-1} (-\rho-1)\cdot (-\rho-2) \cdots (-\rho - m + 1) \nonumber\\
    & = (-1)^{m-1} \frac{\Gamma(-\rho)}{\Gamma(-\rho - m +1)},
\end{align}
which yields ${m + \rho - 1 \choose \rho} = (-1)^{m-1} {- \rho - 1 \choose m - 1}$, and together with the change of variable $k = m-1$ we obtain
\begin{align}
\sum_{m=1}^\infty{m + \rho - 1 \choose m - 1}y^{m-1} & = \sum_{k = 0}^\infty {- \rho - 1 \choose k} (-y)^k \\
& = (1 - y)^{-{\rho} - 1},
\end{align}
where the last equality follows from the binomial formula.
Thus,
\begin{align}
\bE\{V_\rho(X,\hat{X}_1^\infty)\}&=\sum_{x\in\calX}P_X(x)\hat{P}(x)\frac{1}{\hat{P}(x)^{1+\rho}}\nonumber\\
& = \sum_{x\in\calX}\frac{P_X(x)}{\hat{P}(x)^{\rho}}.\label{beflag}
\end{align}
Next, we minimize the last expression with respect to $\hat{P}\in\calP$. 
To this end, since \eqref{beflag} is convex in $\hat{P}$, $\hat{P}^*$ is given by the solution of (for $x\in\calX$)
$$
-\rho\cdot\frac{P_X(x)}{\hat{P}^*(x)^{\rho+1}}+\lambda=0,
$$
where $\lambda$ is a Lagrange multiplier, and thus,
\begin{align}
\hat{P}^*(x) = \frac{P_X(x)^{\frac{1}{1+\rho}}}{\sum_{x'\in\calX}P_X(x')^{\frac{1}{1+\rho}}}.\nonumber
\end{align}
On substituting this optimal distribution in \eqref{beflag} we finally get
\begin{align}
\bE\{V_\rho^*(X,\hat{X}_1^\infty)\}&=\sum_{x\in\calX}\frac{P_X(x)}{\hat{P}^*(x)^{\rho}}=\p{\sum_{x\in\calX}P_X(x)^{\frac{1}{1+\rho}}}^{1+\rho},\nonumber
\end{align}
as claimed.
\end{proof}

The previous lemma applies to a scalar RV $X$, but can be easily extended to sequences $\mathbf{X}_n$, as shown in the following corollary.
\begin{corollary}\label{cor:th1}
	Let $\mathbf{X}$ be a sequence of length $n$ generated i.i.d. from $P_X$. Then, we have,
	\begin{align}
		\lim_{n\to\infty} \frac{1}{n}\log \mathbb{E}\{V_\rho^*(\mathbf{X}, \hat{\mathbf{X}}_1^\infty)\} = \rho \cdot H_{\frac{1}{1+\rho}}(X).
	\end{align}
\end{corollary}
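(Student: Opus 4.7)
The plan is to treat Corollary~\ref{cor:th1} as a direct lift of Lemma~\ref{th:1}, applied to the ``super-letter'' random variable $\mathbf{X}$ living on the finite alphabet $\mathcal{X}^n$ with distribution $P_{\mathbf{X}} = P_X^{\otimes n}$. Nothing in the statement or proof of Lemma~\ref{th:1} was special to the fact that the alphabet was a single-letter one; the argument only used finiteness of the alphabet, and the derivation of the optimal tilt via the Lagrangian carries through verbatim. So my first step would be to invoke Lemma~\ref{th:1} on $(\mathbf{X}, \hat{\mathbf{X}}_1^\infty)$, giving
\begin{align}
\mathbb{E}\{V_\rho^*(\mathbf{X},\hat{\mathbf{X}}_1^\infty)\} = \left(\sum_{\mathbf{x}\in\mathcal{X}^n} P_{\mathbf{X}}(\mathbf{x})^{\frac{1}{1+\rho}}\right)^{1+\rho},
\end{align}
with optimizer $\hat{P}^*_\rho(\mathbf{x}) \propto P_{\mathbf{X}}(\mathbf{x})^{1/(1+\rho)}$.

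The second step is to exploit the product structure $P_{\mathbf{X}}(\mathbf{x}) = \prod_{i=1}^n P_X(x_i)$ to factorize the summation,
\begin{align}
\sum_{\mathbf{x}\in\mathcal{X}^n} P_{\mathbf{X}}(\mathbf{x})^{\frac{1}{1+\rho}} = \left(\sum_{x\in\mathcal{X}} P_X(x)^{\frac{1}{1+\rho}}\right)^n,
\end{align}
so that the optimizer $\hat{P}^*_\rho$ is itself the $n$-fold product of the single-letter tilted distribution. Plugging this factorization back in yields
\begin{align}
\frac{1}{n}\log \mathbb{E}\{V_\rho^*(\mathbf{X},\hat{\mathbf{X}}_1^\infty)\} = (1+\rho)\log\sum_{x\in\mathcal{X}}P_X(x)^{\frac{1}{1+\rho}}.
\end{align}

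The final step is to recognize the right-hand side as $\rho \cdot H_{1/(1+\rho)}(X)$, which follows immediately from the definition of the R\'enyi entropy $H_\alpha(X) = (1-\alpha)^{-1}\log\sum_x P_X(x)^\alpha$ applied with $\alpha = 1/(1+\rho)$ and the identity $1/(1-\alpha) = (1+\rho)/\rho$. Because this equality already holds exactly for every $n$, the limit in $n$ is vacuous, and the claim follows. The ``proof'' is really just a bookkeeping exercise; the only non-obvious piece, namely that the optimization over all distributions on $\mathcal{X}^n$ produces an i.i.d. tilted distribution at the single-letter level, is automatic from the product form of $P_X^{\otimes n}$ and the closed-form expression for $\hat{P}^*_\rho$ from Lemma~\ref{th:1}. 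Consequently, there is no genuine obstacle to resolve; the main substance of the result has already been absorbed in the proof of Lemma~\ref{th:1}.
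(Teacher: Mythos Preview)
Your proposal is correct and follows essentially the same route as the paper: apply Lemma~\ref{th:1} to the super-letter $\mathbf{X}$ on $\mathcal{X}^n$, then use the product structure of $P_{\mathbf{X}}=P_X^{\otimes n}$ (equivalently, additivity of R\'enyi entropy) to reduce to the single-letter expression. The paper's own proof is slightly terser but identical in substance, including the observation that the identity holds exactly for every $n$ so the limit is trivial.
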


\begin{proof}
	Treating $\mathbf{X}$ as a random vector, a direct application of Lemma~\ref{th:1} yields
	\begin{align}
		\log\bE\{V^*_\rho(\mathbf{X},\hat{\mathbf{X}}_1^\infty)\} &= \rho \cdot H_{\frac{1}{1+\rho}}(\mathbf{X})\nonumber\\
		& = \p{\sum_{\mathbf{x}\in\calX^n}P_{\mathbf{X}}(\mathbf{x})^{\frac{1}{1+\rho}}}^{1+\rho}.\nonumber
	\end{align}
	The desired result follows by the additivity of the R\'enyi entropy.
\end{proof}

Note that when $\bX$ is generated i.i.d., tilting the marginal distributions and drawing symbols i.i.d., or tilting the entire product distribution result in the same optimal distribution.

\begin{remark}\label{remark:markov}
We note that the result above can be generalized to passwords $\bX$ which are generated according to an irreducible stationary Markov Chain. More precisely, let $U = (U_{ab})$  and $\gamma_a$, for $a,b \in \mathcal{X}$, be the stochastic matrix and stationary distribution of the Markov chain, respectively, so that
\begin{align}
    \mathrm{Pr}\ppp{\mathbf{X} = (x_1 \ldots x_n)} = \gamma_{x_1} \prod_{i = 1}^{n-1} U_{x_i x_{i+1}} \label{eq:markov}
\end{align}
Then, it was shown in \cite{MaloneSullivan} that
\begin{align}
    \lim_{n \to \infty} \log \mathbb{E}\ppp{G^*(\bX)^\rho} = \frac{1}{1+ \rho} \log \lambda,
\end{align}
where $\lambda$ is the Perron-Frobenius eigenvalue of the matrix with entries $W = (U_{ab}^{1/1+\rho})$ for $a,b \in \mathcal{X}$. Further, let $\ppp{l_a}$ and $\ppp{r_a}$ be the left and right eigenvectors of $W$ associated with $\lambda$, that is
\begin{align}
\sum_{a \in \mathcal{X}} l_a = 1, \quad \sum_{a \in \mathcal{X}} l_a W_{ab} = \lambda l_b, \quad \sum_{b \in \mathcal{X}} r_b W_{ab} = \lambda r_a.    
\end{align}
Analogously to the result of Corollary~\ref{cor:th1}, it can be shown that generating guesses $\hat{\mathbf{X}}$ according to a Markov Chain with entries $W_{ab}r_b/(\lambda r_a)$ achieves the asymptotic performance in \eqref{eq:markov}. A proof of this fact is outside the scope of this paper, but follows from steps outlined in \cite{MaloneSullivan} along with the proof of Lemma~\ref{th:1}. 
\end{remark}

\begin{remark}
In the standard guessing problem \cite{Arikan} Alice tries to guess $X$ using her knowledge of $P_X$. It is assumed that there are no constraints on the memory of Alice, namely, for each new guess, Alice knows her previous guesses, and thus she can adapt her new guess accordingly (i.e., she will not guess again a previous incorrect guess). 
\salman{The setting we consider here is equivalent to one in which Alice} cannot keep track of her guesses, but still knows the distribution $P_X$. 
It should be clear that in this case all that Alice can do is to present a sequence of i.i.d. guesses $\hat{X}_1,\hat{X}_2,\ldots$, drawn from some distribution $\hat{P}(\cdot)$, which shall be optimized in some sense. 
Lemma~\ref{th:1} can be equivalently interpreted as the performance of a memoryless, (or oblivious) attacker \changed{\cite{Boztas1,HanawalSundaresan2,huleihel2017guessing,boztas2014renyi}}.
\end{remark}
We are now ready to prove Theorem~\ref{thm:min_max}.
\begin{proof}[Proof of Theorem~\ref{thm:min_max}]
We start by noting that letting $\{\hat{\mathbf{X}}_k^{(t)}: k \geq 1 \}$ be an i.i.d. process distributed according to $\hat{P}^*$ (as defined in Lemma~\ref{th:1}) gives an upper bound on \eqref{eq:min_max}. 
We prove that two bounds match asymptotically, by showing that the exponent of the upper-bound is equal to $\rho \cdot H_{1/\rho+1}(X)$. 
Indeed, let $\{\mathbf{X}_k^{(t)}: k \geq 1 \}$ be an i.i.d. process distributed according to $\hat{P}^*$  for all $t \in T$. 
Then, it is evident that $\pi(\hat{\mathbf{X}}_1^\infty)$ is also an i.i.d. process distributed according to $\hat{P}^*$, for any permutation $\pi \in \Pi$. 
An application of Corollary~\ref{cor:th1} concludes the proof.
\end{proof}

Note that the optimal distribution from Lemma~\ref{th:1} depends on the moment $\rho$. Indeed, the larger $\rho$, the more we are penalized for passwords which are less frequent (which increase the work significantly). Therefore, the optimal strategy gives extra weight to less frequent symbols as to make sure that they are more likely to be chosen than what their probability suggests. We do so by tilting $P_X$ by $1/{(1+\rho)}$. Nevertheless, the optimal distribution, and thus guessing strategy, will change as a function of the guesswork moment $\rho$ of interest. This contrasts with the synchronous case, in which the optimal strategy consisting of querying the sequences from most likely to least likely is optimal universally for all moments $\rho$. This loss of universality is exploited in the following corollary, which characterizes the loss in using a distribution optimized for a moment $\rho > 0$, when measured in terms of a moment $\gamma \neq \rho$, and is illustrated for a binary source in Figure~\ref{fig:mismatch}.

\begin{corollary}\label{cor:2}
    Fix $\gamma > 0$, and let $\{\hat{X}_k: k \geq 1\}$ be an i.i.d. process generated according to $\hat{P}^*_\gamma(x)$. Then:
    \begin{align}
    \log \mathbb{E}\{V_\rho(X,\hat{X}_1^\infty)\} = \frac{\rho}{1 + \gamma} H_{\frac{\gamma - \rho + 1}{1 + \gamma}}(X) + \frac{\gamma \cdot \rho}{1 + \gamma}H_{\frac{1}{1 + \gamma}}(X)
    \end{align}
\end{corollary}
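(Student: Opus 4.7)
The plan is to plug the mismatched guessing distribution $\hat{P}^*_\gamma$ into the general formula derived inside the proof of Lemma~\ref{th:1}, and then identify the resulting exponents as R\'enyi entropies of appropriate orders. The key observation is that the identity
\begin{align}
\bE\{V_\rho(X,\hat{X}_1^\infty)\} = \sum_{x\in\calX}\frac{P_X(x)}{\hat{P}(x)^{\rho}} \nonumber
\end{align}
established in the proof of Lemma~\ref{th:1} (cf.\ equation \eqref{beflag}) is valid for \emph{any} i.i.d.\ guessing distribution $\hat{P}$, and not only for the minimizer. So the whole computation reduces to evaluating this sum at $\hat{P} = \hat{P}^*_\gamma$.

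Concretely, writing $Z_\gamma \triangleq \sum_{x'\in\calX} P_X(x')^{1/(1+\gamma)}$, so that $\hat{P}^*_\gamma(x) = P_X(x)^{1/(1+\gamma)} / Z_\gamma$, the substitution gives
\begin{align}
\bE\{V_\rho(X,\hat{X}_1^\infty)\} = Z_\gamma^{\rho} \sum_{x\in\calX} P_X(x)^{1-\rho/(1+\gamma)} = Z_\gamma^{\rho} \sum_{x\in\calX} P_X(x)^{(1+\gamma-\rho)/(1+\gamma)}. \nonumber
\end{align}
The next step is purely book-keeping: by definition of the R\'enyi entropy, $\log \sum_x P_X(x)^\alpha = (1-\alpha) H_\alpha(X)$. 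Applied with $\alpha = 1/(1+\gamma)$ one gets $\log Z_\gamma = \frac{\gamma}{1+\gamma} H_{1/(1+\gamma)}(X)$; applied with $\alpha = (1+\gamma-\rho)/(1+\gamma)$, for which $1-\alpha = \rho/(1+\gamma)$, the second factor contributes $\frac{\rho}{1+\gamma} H_{(\gamma-\rho+1)/(1+\gamma)}(X)$.

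Taking the logarithm of the product and combining the two terms yields exactly
\begin{align}
\log \bE\{V_\rho(X,\hat{X}_1^\infty)\} = \frac{\rho}{1+\gamma} H_{\frac{\gamma-\rho+1}{1+\gamma}}(X) + \frac{\gamma\cdot\rho}{1+\gamma} H_{\frac{1}{1+\gamma}}(X), \nonumber
\end{align}
as claimed. There is no real obstacle in this argument beyond careful algebra; the only subtlety worth checking is that the order $(\gamma-\rho+1)/(1+\gamma)$ is positive, which is needed for $H_\alpha(X)$ to be well defined as a R\'enyi entropy, and for the sum $\sum_x P_X(x)/\hat{P}^*_\gamma(x)^\rho$ to be finite (i.e.\ $\rho < 1 + \gamma$); for $\rho \geq 1+\gamma$ the expression blows up, in agreement with the fact that an overly conservative tilt $\gamma$ under-weights rare symbols relative to the cost of a large moment $\rho$.
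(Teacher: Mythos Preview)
Your proof is correct and follows exactly the approach indicated in the paper, namely substituting $\hat{P}=\hat{P}^*_\gamma$ into \eqref{beflag}; you simply spell out the algebra that the paper leaves to the reader. One minor quibble: your closing remark that the expression ``blows up'' for $\rho\ge 1+\gamma$ is not quite accurate on a finite alphabet with full support---the sum $\sum_x P_X(x)/\hat P^*_\gamma(x)^\rho$ remains finite, and the formula still holds with the R\'enyi entropy interpreted at nonpositive orders---but this does not affect the validity of the main argument.
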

\begin{proof}
    The proof follows by substituting  $\hat{P}(\cdot) = \hat{P}^*_\gamma(\cdot)$ into \eqref{beflag}.
\end{proof}

\begin{figure}
\centering
\includegraphics[scale =.6]{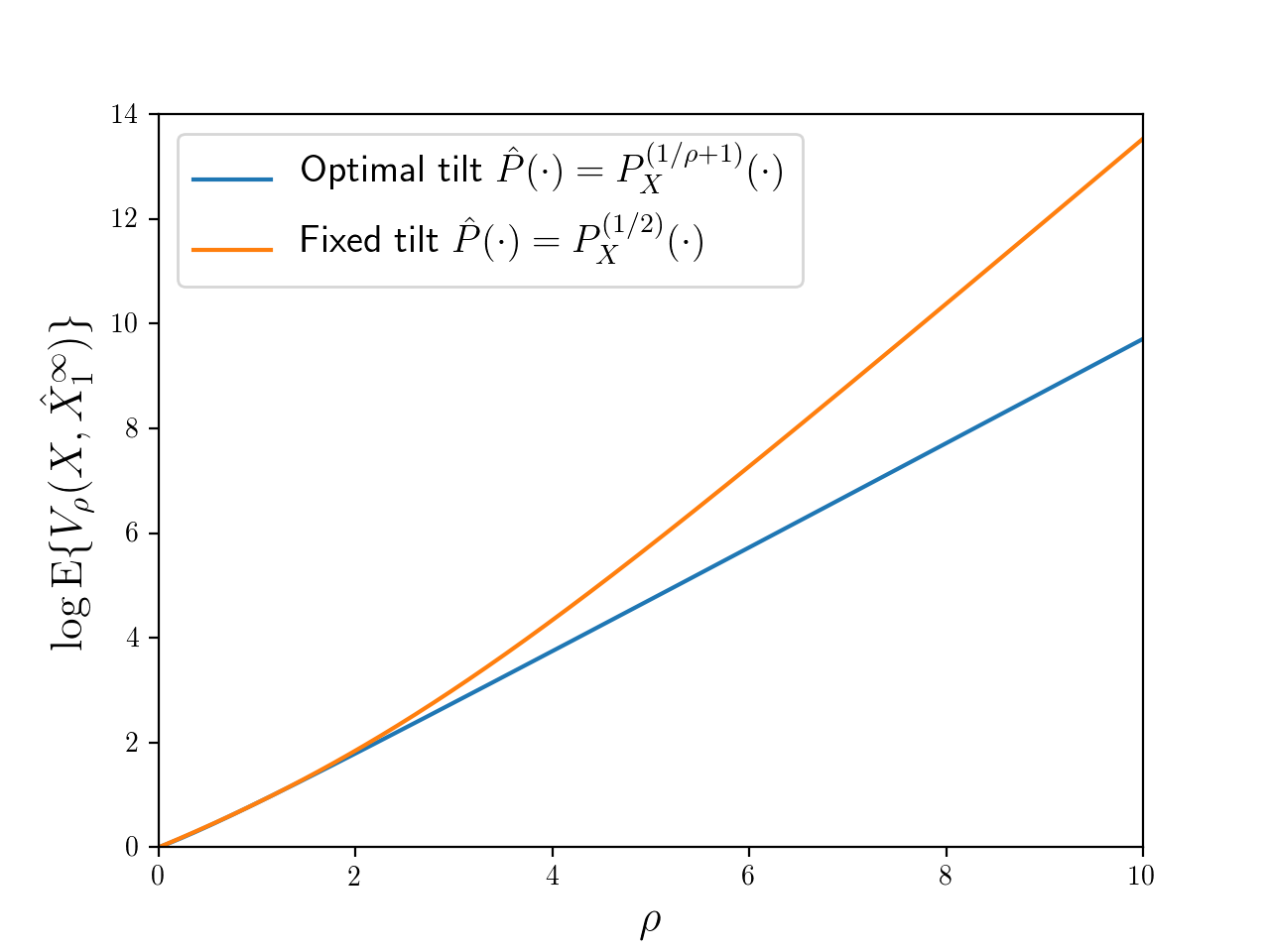}
\caption{This plots compares the performance of the randomized strategy as a function of the moment $\rho$. We compare the optimal strategy which depends on $\rho$, against a fixed tilted distribution ($\gamma = 1$ in Corollary~\ref{cor:2}), when $X\sim \mathrm{Ber}(1/5)$.}
\label{fig:mismatch}
\end{figure}

\begin{remark}[Zipf's distribution]
	\changed{
		We emphasize that Lemma~\ref{th:1} is a non-asymptotic result. As such, it can be readily used in the context of passwords generated according to a Zipf's law distribution of parameter $s$ for some $s \geq 0$ (also known as PDF-Zipf model \cite{wang2016implications}), i.e.,
		\begin{align}
			P_X(i) \triangleq \frac{1}{H_{m,s}} \cdot \frac{1}{i^s}
		\end{align}
		where $i = 1,\ldots, m$, and $H_{m,s}$ is the \emph{generalized} harmonic number defined as $H_{m,s} = \sum_{j = 1}^m \frac{1}{j^s}$. {As pointed out in the introduction, this family of distribution has been shown in the literature to be useful in modeling password distributions, where the parameter $s$ is dataset dependent. We refer to \cite{wang2016implications,wang2017zipf} for more details about the relevance of the Zipf's law in this setting.} Under this distribution, applying Lemma~\ref{th:1}, we obtain that the optimal i.i.d. guessing strategy is to generate guesses according to a Zipf's law of parameter $s/(\rho + 1)$. Further, we get that
		\begin{align}
			\log \mathbb{E}\left\{ V_\rho^*(X,\hat{X}_1^\infty) \right\} = (1 + \rho) \log H_{m,\frac{s}{1+\rho}} - \log H_{m,s}.
		\end{align}
		Note that this is worse than the optimal synchronized strategy which achieves $\log H_{m,(s - \rho)} - \log H_{m,s}$, for $s \geq \rho$, but can perform much better than picking the sub-optimal i.i.d. guessing distribution $P_{\hat{X}} = P_X$, which gives $\log m$. {Note that a similar result would hold for the so-called CDF-Zipf's law in \cite{wang2016implications}, i.e., when $P_X(i) = Ci^s - C(i - 1)^{s}$, for some  normalizing constant $C$ and parameter $0 \geq s \leq 1$. Namely, it is easy to show that the resulting optimal i.i.d. strategy is then according to the distribution $\hat{P}^*_\rho(i) = C' (i^s - (i-1)^s)^{\frac{1}{1 + \rho}}$, where $C'$ is once again a normalizing constant. }
	}
\end{remark}	

\begin{remark}[Targeted Attacks]
Lemma~\ref{th:1} can also be generalized to the case of availability of some side information $Y$ which is correlated with $X$. That is, $(X,Y)$ is now a pair of random variables with joint distribution $P_{XY}$. 
{This models targeted attacks \cite{wang2016targeted} where an adversary makes use of the additional information he possess about an user (e.g. personal
information, previously compromised passwords), as modeled by the side-information $Y$, to make guesses. Note that, as there are various kinds of side-information $Y$ (e.g., sister password, gender), each of which has a different role in impacting password creation, how to systematically employ such side-information $Y$ is subtle. We refer readers to  \cite{wang2016targeted} for a more precise treatment of targeted attacks, and the change in performance that results from them.} Then, assume that the guesser generates a sequence of guesses $\hat{X}_1,\hat{X}_2,\ldots$ which are i.i.d. \emph{given} $Y$, and distributed according to $\hat{P}_{X|Y}(\cdot\vert\cdot)$. As before, we define $G(X,\hat{X}_1^\infty|Y)\triangleq\inf\{k\geq1:\;\hat{X}_k(Y)=X\}$. Then, following the proof of Theorem~\ref{th:1} we can show that the optimal guessing distribution is 
\begin{align}
\hat{P}_{X|Y}^*(x|y) = \frac{P_{X|Y}(x|y)^{\frac{1}{1+\rho}}}{\sum_{x'\in\calX}P_{X|Y}(x'|y)^{\frac{1}{1+\rho}}}
\end{align}
for any $x\in\calX$ and $y\in\calY$, and
\begin{align}
\log\bE\{V^*_\rho(X,\hat{X}_1^\infty|Y)\} &= \rho \cdot H_{\frac{1}{1+\rho}}(X|Y),
\end{align}
where $H_\alpha(X|Y)$ is the conditional R\'enyi entropy of order $\alpha$, and $V^*_\rho(X,\hat{X}_1^\infty|Y)$ is defined as in \eqref{optimization} but with $G(X,\hat{X}_1^\infty)$ replaced by $G(X,\hat{X}_1^\infty|Y)$. \changed{This demonstrates that targeted attacks can also be performed in a distributed way by employing i.i.d. guesses from the distribution $P_{X|Y}(\cdot | Y)$. Note that this assumes that all distributed agents have access to the same side-information $Y$. A setting in which this does not hold true, i.e. agents may use different side-information $Y_i$, is outside the scope of this paper, but was studied in \cite{salamatian2017centralized}}. In particular, \cite{salamatian2017centralized} compares two mechanisms, one in which the agents do not share their side-information and attempt to breach the system independently, and one in which all the side-information is pooled.
\end{remark}

\section{Constraints on the Number of Guesses}\label{sec:prob_succ}
In Section~\ref{sec:synchronization}, we considered the case in which guesses are made until the correct sequence is found. 
In this section, we consider the case where adversaries can use only a fixed number of \salman{guesses} denoted by $J$. 
\changed{The goal of the adversary is then to maximize her probability of success within this fixed number of queries, both in the synchronized case \cite{Arikan}, as well as the asynchronous case. For synchronous guessers, the probability of success associated with the optimal strategy is given by}
\begin{align}
\mathrm{P}_{c,J}^{\text{synchr}} &= \sum_{x\in\calL}P_{X}(x),\nonumber
\end{align}
\changed{where $\mathcal{L}$ designates the set of the $J$ most likely elements according to $P_X$. For asynchronous guessers, one strategy consists in generating guesses $\hat{X}$ i.i.d. from a distribution $P_{\hat{X}}$, as was done in the previous section. This setting was precisely studied in \cite[Theorem~6]{boztas2014renyi}, where the optimal guessing distribution $P_{\hat{X}}$ was characterized as a function of the password distribution $P_X$ and of the number of guesses $J$.} Instead, in this work, we focus on the scenario of guessing $n$-length i.i.d. sequences, and we assume the adversaries make $J = \lceil \mathcal{X}^{n\alpha} \rceil$ total guesses.
We analyze the success probability in guessing the correct sequence and derive expressions which are exponentially tight as a function of $n$.
We consider both the synchronized case \cite{Arikan} as well as the asynchronous case. 

We start with synchronized guessers, and define the exponential rate of $\mathrm{P}_{c,J}^{\text{synchr}}$ as
\begin{align}
E_{c,\alpha}^{\text{synchr}}&\triangleq\liminf_{n\to\infty}-\frac{1}{n}\log\mathrm{P}_{c,J}^{\text{synchr}} \\
& = \liminf_{n\to\infty}-\frac{1}{n}\log \sum_{\mathbf{x} \in \mathcal{L}} P_{\mathbf{X}}(\mathbf{x}),
\end{align}
\changed{where again $\mathcal{L}$ represents the set of the $J$ most likely elements distributed according this time to the product distribution $P_{\mathbf{X}}$.}
The following result is an immediate application of the large deviation principle of Guesswork, shown in \cite{ChristiansenDuffy}.

\begin{theorem}[Theorem~3 in \cite{ChristiansenDuffy}]\label{th:2}
	For any $\alpha\in\pp{0,1}$,
	\begin{align}
	E_{c,\alpha}^{\text{synchr}} &= \min_{Q_X \in \mathcal{Q}(\alpha)}D(Q_X \| P_X), \label{eq:opti_prob_succ}
	\end{align}
	where $\mathcal{Q}(\alpha)$ is defined as:
	\begin{align}
	\mathcal{Q}(\alpha) &= \left\{Q_{X} : D(Q_{X}\| P_{X}) + H(Q_{X})< D(Q^*_{X}\| P_{X}) + H(Q^*_{X}) \right\},
	\end{align}
	with $Q^*_{X}$ being the solution of the optimization problem:
	\begin{equation}\label{eq:threshold_prob}
	\begin{aligned}
	& \underset{Q_{X}}{\text{minimize}}
	& & D(Q_{X} \| P_{X} ) + H(Q_{X}) \\
	& \text{subject to}
	& & H(Q_{X}) \geq \alpha
	\end{aligned}
	\end{equation}
	In particular, if $\alpha > H(P_X)$, then $E_{c,\alpha}^{\text{synchr}} = 0$.
\end{theorem}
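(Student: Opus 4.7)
The plan is to recognize \eqref{eq:opti_prob_succ} as a specialization of the large-deviations principle for $G^*(\mathbf{X}_n)$ established in~\cite{ChristiansenDuffy}, combined with the method of types. First, I rewrite $\mathrm{P}_{c,J}^{\text{synchr}} = \Pr\{G^*(\mathbf{X}_n) \leq J\}$, so the question reduces to the lower tail of $G^*(\mathbf{X}_n)$ at level $J = \lceil |\mathcal{X}|^{n\alpha} \rceil$. For any $\mathbf{x} \in T(Q)$, the method of types gives $P_{\mathbf{X}}(\mathbf{x}) \doteq \exp(-n[D(Q\|P_X) + H(Q)])$ while $|T(Q)| \doteq \exp(nH(Q))$. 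The optimal list $\mathcal{L}$ orders sequences by decreasing $P_{\mathbf{X}}(\mathbf{x})$, so up to aggregating within types, $\mathcal{L}$ is precisely the union of type classes with smallest values of $D(Q\|P_X) + H(Q)$.

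Next, I would identify the inclusion threshold. Since there are only polynomially many types but exponentially many sequences per type, the budget constraint $|\mathcal{L}| \doteq e^{n\alpha \log |\mathcal{X}|}$ translates on the exponential scale to including all types whose cumulative cardinality saturates $J$. This is captured by the threshold optimization \eqref{eq:threshold_prob}: the critical type $Q^*_X$ minimizes $D(Q_X\|P_X) + H(Q_X)$ subject to the entropy budget, and all types strictly below the threshold value $D(Q^*_X\|P_X) + H(Q^*_X)$ lie in $\mathcal{L}$, which is exactly the definition of $\mathcal{Q}(\alpha)$.

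Third, the success probability decomposes as
\begin{align}
\mathrm{P}_{c,J}^{\text{synchr}} = \sum_{Q \in \mathcal{Q}(\alpha)} \Pr\{\mathbf{X}_n \in T(Q)\} \doteq \sum_{Q \in \mathcal{Q}(\alpha)} e^{-n D(Q\|P_X)},
\end{align}
and Laplace's principle selects the dominant exponent, yielding $E_{c,\alpha}^{\text{synchr}} = \min_{Q \in \mathcal{Q}(\alpha)} D(Q\|P_X)$. The degenerate case $\alpha > H(P_X)$ places $P_X$ itself inside $\mathcal{Q}(\alpha)$ (since its entropy trivially meets the budget with slack on the other side), giving rate zero.

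The main obstacle is the careful handling of the inclusion boundary: one must verify that on the exponential scale the strict inequality defining $\mathcal{Q}(\alpha)$ gives the correct answer (boundary types contribute only subexponentially because the number of types is polynomial in $n$), and that the Lagrangian-style threshold $D(Q^*_X\|P_X) + H(Q^*_X)$ emerging from \eqref{eq:threshold_prob} truly matches the cardinality budget $J = \lceil |\mathcal{X}|^{n\alpha}\rceil$. Since the statement is precisely Theorem~3 of~\cite{ChristiansenDuffy}, the cleanest route is to invoke their LDP for $n^{-1}\log G^*(\mathbf{X}_n)$ directly and specialize the rate function to the lower-tail event, reading off \eqref{eq:opti_prob_succ}.
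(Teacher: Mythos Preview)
Your proposal is correct and mirrors the paper's own treatment: the paper does not give a standalone proof of Theorem~\ref{th:2} but defers to the LDP of~\cite{ChristiansenDuffy}, and the type-threshold identification you describe is exactly the content of Lemma~\ref{lem:types} in the appendix (same use of $P_{\mathbf{X}}(\mathbf{x}) \doteq e^{-n[D(\hat P_{\mathbf{x}}\|P_X)+H(\hat P_{\mathbf{x}})]}$, $|T(Q)|\doteq e^{nH(Q)}$, and the budget equation $\sum_{Q\in\mathcal{Q}(\alpha)} e^{nH(Q)} \doteq e^{n\alpha}$ to pin down $Q^*_X$). Your Laplace-principle step and the handling of the $\alpha>H(P_X)$ case are the natural completion of that lemma.
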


Note that the average number of guesses, roughly $2^{nH_{1/2}(X)}$, is much larger than the required list size that drives $\mathrm{P}_{c,J}^{\text{synchr}}$ to one (exponentially). This great difference comes from the way atypical events are treated in each optimization. In the case of guesswork, an exponential price is payed for atypical events, since the number of queries will be exponential. For probability of error however, the scenario is closer to regular source coding in which the impact of atypical events is sub-exponential, meaning that the optimized quantity will necessarily be related to the typical events. 
Consider now the asynchronous case, and let $\{\hat{X}_k : k \geq 1\}$ be once again i.i.d. with distribution $P_{\hat{X}}$. In this case the probability of success is defined as
\begin{align}
\mathrm{P}_{c,J}^{\text{asynchr}} &\triangleq \Pr\ppp{G(\mathbf{X},\hat{\mathbf{X}}_1^\infty)\leq J}.
\end{align}
One can verify that
\begin{align}
\mathrm{P}_{c,J}^{\text{asynchr}} = \sum_{\mathbf{x}\in\calX^n}P_{\mathbf{X}}(\mathbf{x})\pp{1-(1-P_{\hat{\mathbf{X}}}(\mathbf{x}))^J}.\nonumber
\end{align}
Finally we define
\begin{align}
E_{c,\alpha}^{\text{asynchr}}\triangleq\liminf_{n\to\infty}-\frac{1}{n}\log\mathrm{P}_{c,J}^{\text{asynchr}}.\label{Exponent2}
\end{align}
While, in principle, the distribution $P_{\hat{\mathbf{X}}}$ can be optimized to maximize the probability of success, we will assume that this distribution is simply given by the tilted distribution of $P_{\mathbf{X}}$, namely, for some $\beta\geq0$, and any $\mathbf{x}\in\calX^n$,
\begin{align}
P_{\hat{X}}^{(\beta)}(x) \triangleq \frac{P_{X}(x)^\beta}{\sum_{x\in\calX}P_{X}(x)^\beta}.
\end{align}
We motivate this choice by the results of the previous sub-section, which showed that these tilted distributions were optimal in terms of the number of guesses.
We have the following result.

\begin{theorem}\label{th:3}
	For any $\alpha,\beta\geq0$,
	\begin{align}
	E_{c,\alpha}^{\text{asynchr}}(\beta)&=\min_{Q_X \in \mathcal{Q}(\alpha)}\left\{\vphantom{\pp{D(Q_X||P_{\hat{X}}^{(\beta)})+H(Q_X)-\alpha}_+}D(Q_X||P_X)+\pp{D(Q_X||P_{\hat{X}}^{(\beta)})+H(Q_X)-\alpha}_+\right\},
	\end{align}
	where $\pp{x}_+ \triangleq \max \{ x, 0\}$.
\end{theorem}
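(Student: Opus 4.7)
The plan is to carry out a standard method-of-types / large-deviation analysis, in the spirit of the proof of Theorem~\ref{th:2} \cite{ChristiansenDuffy}, but with an additional multiplicative factor $1-(1-p)^J$ that captures the fact that the $J$ queries are i.i.d.\ random draws from $P_{\hat{X}}^{(\beta)}$ rather than an ordered deterministic list. Concretely, I would partition the sum defining $\mathrm{P}_{c,J}^{\text{asynchr}}$ by type class and apply the standard estimates $|T(Q_X)|\exe e^{nH(Q_X)}$, $P_{\mathbf{X}}(\mathbf{x})=e^{-n(H(Q_X)+D(Q_X\|P_X))}$, and $P_{\hat{\mathbf{X}}}(\mathbf{x})=e^{-n(H(Q_X)+D(Q_X\|P_{\hat{X}}^{(\beta)}))}$ for $\mathbf{x}\in T(Q_X)$.

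For the bracket $1-(1-P_{\hat{\mathbf{X}}}(\mathbf{x}))^J$, I would use the elementary two-sided bound $\tfrac{1}{2}\min\{1,Jp\}\le 1-(1-p)^J\le \min\{1,Jp\}$, which, combined with $J\exe e^{n\alpha}$, immediately yields the exponential estimate
\begin{align}
1-(1-P_{\hat{\mathbf{X}}}(\mathbf{x}))^J \;\exe\; e^{-n\left[H(Q_X)+D(Q_X\|P_{\hat{X}}^{(\beta)})-\alpha\right]_+}. \nonumber
\end{align}
Putting these pieces together gives
\begin{align}
\mathrm{P}_{c,J}^{\text{asynchr}} \;\exe\; \sum_{Q_X} e^{-n\left(D(Q_X\|P_X) + \left[D(Q_X\|P_{\hat{X}}^{(\beta)})+H(Q_X)-\alpha\right]_+ \right)}. \nonumber
\end{align}
Since the number of distinct empirical distributions on $\mathcal{X}^n$ grows only polynomially in $n$, Laplace's method equates the log of this sum with the minimum of the exponent, and in the limit $n\to\infty$ the minimum over types becomes a minimum over all probability distributions $Q_X$ on $\mathcal{X}$. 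This recovers the expression inside the braces of the theorem, but minimized over all of $\mathcal{P}(\mathcal{X})$.

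The remaining step, which I expect to be the main obstacle, is to show that the minimization may be restricted to $\mathcal{Q}(\alpha)$ without changing its value. Heuristically this is clear: for any $Q_X\notin\mathcal{Q}(\alpha)$, the defining inequality $D(Q_X\|P_X)+H(Q_X)\ge D(Q_X^*\|P_X)+H(Q_X^*)\ge D(Q_X^*\|P_X)+\alpha$ forces both $D(Q_X\|P_X)$ and the $\left[\cdot\right]_+$ term to be at least as large as they are at the boundary point $Q_X^*\in\overline{\mathcal{Q}(\alpha)}$. To formalize this I would move $Q_X$ continuously toward $Q_X^*$ along a level set of $D(\cdot\|P_X)+H(\cdot)$ and verify monotonicity of the objective along the path; together with convexity of the objective, this should pin the minimizer inside $\mathcal{Q}(\alpha)$ and close the argument.
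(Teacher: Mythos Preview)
Your approach is essentially identical to the paper's: the paper also decomposes $\mathrm{P}_{c,J}^{\text{asynchr}}$ by type, uses exactly the same two-sided bound $\tfrac12\min\{1,aM\}\le 1-(1-a)^M\le\min\{1,aM\}$ (citing \cite{AneliaMerhav}) to handle the bracket, and then applies the Laplace/method-of-types argument to extract the exponent. The paper does this for the binary case only, writing the minimization as $\min_{q\in[0,1]}$, but the mechanism is the same as what you describe.

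One point worth noting: the ``main obstacle'' you identify --- restricting the minimization from all of $\calP(\calX)$ to $\mathcal{Q}(\alpha)$ --- is \emph{not} carried out in the paper. The paper's proof concludes with
\[
E_{c,\alpha}^{\text{asynchr}}=\min_{q\in[0,1]}\ppp{D(q\|p)+\pp{D(q\|\bar{p}^\beta)+H(q)-\alpha}_+},
\]
i.e., an unrestricted minimum over all binary distributions, and stops there. So either the set $\mathcal{Q}(\alpha)$ in the theorem statement is a slip (and the intended domain is all of $\calP(\calX)$), or the authors regard the equivalence as immediate and leave it implicit. In any case, your level-set/monotonicity argument for this step is not needed to match the paper, and as written it is too vague to be convincing on its own (the claim that both $D(Q_X\|P_X)$ and the $[\cdot]_+$ term are individually larger outside $\mathcal{Q}(\alpha)$ than at $Q_X^*$ does not follow from the defining inequality of $\mathcal{Q}(\alpha)$, which only controls their weighted combination $D+H$). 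If you want to close this cleanly, it is simpler to just state the result with the unrestricted minimum, which is what the type computation actually delivers.
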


Using Theorem~\ref{th:3}, we obtain the following immediate result.
\begin{corollary}\label{cor:1}
\begin{align}
\min_{\beta\geq0}E_{c,\alpha}^{\text{asynchr}}&= \min_{Q_X \in \mathcal{Q}(\alpha)}D(Q_X \| P_X) \\
& = E_{c,\alpha}^{\text{synchr}}.
\end{align}
\end{corollary}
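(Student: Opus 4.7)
The plan is to establish Corollary~\ref{cor:1} by sandwiching $\min_{\beta\geq 0} E_{c,\alpha}^{\text{asynchr}}(\beta)$ between $E_{c,\alpha}^{\text{synchr}}$ on both sides, using Theorem~\ref{th:3} as a starting point.

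\textbf{Lower bound ($\geq$).} This is immediate: since $[\cdot]_+\geq 0$, the formula in Theorem~\ref{th:3} gives, for every $\beta\geq 0$ and every $Q\in\mathcal{Q}(\alpha)$, the pointwise inequality $D(Q\|P_X)+\pp{D(Q\|P_{\hat{X}}^{(\beta)})+H(Q)-\alpha}_+\geq D(Q\|P_X)$. Minimizing over $Q\in\mathcal{Q}(\alpha)$ and then over $\beta\geq 0$ yields $\min_{\beta\geq 0}E_{c,\alpha}^{\text{asynchr}}(\beta)\geq\min_{Q\in\mathcal{Q}(\alpha)}D(Q\|P_X)=E_{c,\alpha}^{\text{synchr}}$.

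\textbf{Upper bound ($\leq$).} The idea is to choose $\beta^*$ so that the guessing distribution $P_{\hat{X}}^{(\beta^*)}$ coincides with the synchronous minimizer $Q^*$ of Theorem~\ref{th:2}, causing the $[\cdot]_+$ term to vanish. Using $D(Q\|P_X)+H(Q)=\mathbb{E}_Q[-\log P_X]$, the optimization \eqref{eq:threshold_prob} reduces to minimizing $\mathbb{E}_Q[-\log P_X]$ subject to $H(Q)\geq\alpha$. KKT analysis with Lagrange multiplier $\mu>0$ gives $Q^*(x)\propto P_X(x)^{1/\mu}$, with $\mu$ tuned so that $H(Q^*)=\alpha$ (in the non-trivial regime $\alpha\leq H(P_X)$; if $\alpha> H(P_X)$ then Theorem~\ref{th:2} already asserts $E_{c,\alpha}^{\text{synchr}}=0$, and picking $\beta^*=1$ so that $P_{\hat{X}}^{(1)}=P_X$ and evaluating Theorem~\ref{th:3} at $Q=P_X\in\mathcal{Q}(\alpha)$ also produces $0$). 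Setting $\beta^*=1/\mu$, we have $P_{\hat{X}}^{(\beta^*)}(x)=P_X(x)^{\beta^*}/Z_{\beta^*}=Q^*(x)$, and hence
\begin{align}
D(Q^*\|P_X)+\pp{D(Q^*\|Q^*)+H(Q^*)-\alpha}_+=D(Q^*\|P_X)+\pp{\alpha-\alpha}_+=D(Q^*\|P_X).
\end{align}
A parallel KKT argument applied to the outer problem $\min_{Q\in\mathcal{Q}(\alpha)}D(Q\|P_X)$ (maximize $H(Q)$ subject to $\mathbb{E}_Q[-\log P_X]=\mathbb{E}_{Q^*}[-\log P_X]$) identifies the \emph{same} tilted distribution $Q^*$ as the minimizer, so $D(Q^*\|P_X)=E_{c,\alpha}^{\text{synchr}}$. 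Therefore $E_{c,\alpha}^{\text{asynchr}}(\beta^*)\leq E_{c,\alpha}^{\text{synchr}}$, which together with the lower bound gives equality.

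\textbf{Main obstacle.} The delicate step is aligning the two optimizations: Theorem~\ref{th:2} characterizes $Q^*$ via an \emph{entropy-constrained} problem, whereas the inner minimization appearing in Corollary~\ref{cor:1} is a \emph{KL-constrained} problem over $\mathcal{Q}(\alpha)$. The key structural fact that makes a single $\beta^*$ work for both is that the tilted family $\{P_X^\gamma/Z_\gamma\}_{\gamma\geq 0}$ is the common solution set for both problems, and the correct tilt exponent is uniquely pinned down by monotonicity of $\gamma\mapsto\mathbb{E}_{Q_\gamma}[-\log P_X]$ (equivalently, convexity of the log-partition $\log\sum_x P_X(x)^\gamma$). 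A minor technicality is that $\mathcal{Q}(\alpha)$ is defined with a strict inequality, so $Q^*$ lies on its boundary; the ``min'' should be read as an infimum approached from within $\mathcal{Q}(\alpha)$, which does not affect the large-deviation exponent.
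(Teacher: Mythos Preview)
Your proof is correct and follows essentially the same route as the paper: both sandwich $\min_\beta E_{c,\alpha}^{\text{asynchr}}$ between $E_{c,\alpha}^{\text{synchr}}$ on both sides, with the upper bound obtained by recognizing that the minimizer $Q^*$ of \eqref{eq:threshold_prob} is a tilted distribution and then choosing $\beta^*$ so that $P_{\hat X}^{(\beta^*)}=Q^*$, which together with $H(Q^*)=\alpha$ annihilates the $[\cdot]_+$ term in Theorem~\ref{th:3}. The only minor differences are that the paper's lower bound invokes the operational fact that asynchronous guessing cannot beat synchronous guessing (rather than your direct $[\cdot]_+\ge 0$ argument from Theorem~\ref{th:3}), and that you are more explicit than the paper about the KKT identification of $Q^*$ and the boundary/strict-inequality technicality in the definition of $\mathcal{Q}(\alpha)$.
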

Corollary~\ref{cor:1} essentially proves that the tilted family is asymptotically optimal, and that there exist a unique optimal tilt $\beta$ for each size list $J = \lceil \mathcal{X}^{n \alpha}\rceil$. \changed{It follows from this that even though the optimization \eqref{eq:opti_prob_succ} is over a set of distributions $\mathcal{Q}(\alpha)$, the solution is always a tilted distribution $P^{(\beta)}_X$ for some $\beta \geq 0$ which depends on $\alpha$.}
\begin{proof}[Proof of Corollary \ref{cor:1}]
By definition, $\min_{\beta\geq0}E_{c,\alpha}^{\text{asynchr}}\geq0$. Then, for $\alpha \geq H(P_X)$, we see from Theorem~\ref{th:3} that by taking $Q_X=P_X$ and $\beta=1$, we have
\begin{align}
\min_{\beta\geq0}E_{c,\alpha}^{\text{asynchr}} &\leq \pp{H(P_X)-\alpha}_+= 0.
\end{align}
For $\alpha< H(P_X)$, we first note that by definition $\min_{\beta\geq0}E_{c,\alpha}^{\text{asynchr}}\geq E_{c,\alpha}^{\text{synchr}}$. Hence, due to Theorem~\ref{th:2} and Lemma~\ref{lem:types} in the appendix we may conclude that 
\begin{align} 
\min_{\beta\geq0}E_{c,\alpha}^{\text{asynchr}}\geq D(Q^*_X \| P_X),\label{con0}
\end{align}  
where $Q^*_X$ is the solution of the optimization
	\begin{equation}\label{eq:threshold_prob}
\begin{aligned}
& \underset{Q_X}{\text{minimize}}
& & D(Q_X\| P_X ) + H(Q_X) \\
& \text{subject to}
& & H(Q_X) \geq \alpha.
\end{aligned}
\end{equation}
On the other hand, by taking $Q_X = Q^*_X $, we have
\begin{align}
\min_{\beta\geq0}E_{c,\alpha}^{\text{asynchr}} & \leq D(Q^*_X\| P_X)+\min_{\beta\geq0}\pp{D(Q^*_X||P_{\hat{X}}^{(\beta)})}_+.\nonumber
\end{align} 
It is a simple exercise to verify that $Q^*_X$ is a tilted distribution, \emph{i.e.} there exist a $\tilde{\beta}$ such that $Q^*(x) = \frac{Q_X(x)^{\tilde{\beta} }}{\sum_{x'}Q_X(x')^{\tilde{\beta} }}$.
Letting $\beta = \tilde{\beta}$ gives
\begin{align}
\min_{\beta\geq0}E_{c,\alpha}^{\text{asynchr}} & \leq D(Q^*_X||P_X).\label{con1}
\end{align}
\salman{The result follows from combining \eqref{con0} and \eqref{con1}}.
\end{proof}

We next provide the proofs of Theorems \ref{th:2} and \ref{th:3}.

\begin{proof}[Proof of Theorem~\ref{th:3}]
For simplicity of presentation, we prove the theorem for binary sequences, i.e. $\mathcal{X} = \{0,1\}$, and assume that $1/2 \geq p \triangleq P_X(0)$. For any given sequence $x^n\in\calX^n$,
\begin{align}
\frac{1}{n}\log \hat{P}_{X^n}(x^n) &= -D(\hat{P}_{\mathbf{x}_n}||\bar{p}^\beta)-H(\hat{P}_{\mathbf{x}_n})\label{types2}
\end{align}
where $\hat{P}_{\mathbf{x}_n}$ is the empirical measure of a given sequence $x^n$, and $\bar{p}^\beta = \frac{p^\beta}{p^\beta + (1-p)^\beta}$. Then,
\begin{align}
\mathrm{P}_{c,J}^{\text{asynchr}}& = \sum_{x^n\in\calX^n}P_{X^n}(x^n)\pp{1-(1-\hat{P}_{\mathbf{x}_n})^J}\nonumber\\
& = \sum_{x^n\in\calX^n}2^{-n\p{D(\hat{P}_{\mathbf{x}_n}||p)+H(\hat{P}_{\mathbf{x}_n})}}\nonumber\\
&\ \ \ \ \ \ \ \ \ \ \ \times\pp{1-(1-2^{-n(D(\hat{P}_{\mathbf{x}_n}||\bar{p}^\beta)+H(\hat{P}_{\mathbf{x}_n}))})^J}.\nonumber
\end{align}
Letting $\mathcal{Q}_n$ denote the set of possible types, i.e. $\calQ_{n}\triangleq\ppp{0,1/n,2/n,\ldots,n/n}$ we obtain,
\begin{align}
\mathrm{P}_{c,J}^{\text{asynchr}}& = \sum_{q\in\calQ_{n,n}}\abs{T(q)}2^{-n\p{D(q||p)+H(q)}}\nonumber\\
&\ \ \ \ \ \ \ \ \ \ \times\pp{1-(1-2^{-n(D(q||\bar{p}^\beta)+H(q))})^J}\nonumber\\
& \doteq \sum_{q\in\calQ_{n,n}}\hspace{-0.2cm}2^{nH(q)}2^{-n\p{D(q||p)+H(q)}}2^{-n\pp{D(q||\bar{p}^\beta)+H(q)-\alpha}_+}\nonumber\\
& \doteq\max_{q\in\pp{0,1}}2^{-n\pp{D(q||p)+\pp{D(q||\bar{p}^\beta)+H(q)-\alpha}_+}}\nonumber
\end{align}
where the fourth equation follows from the fact that (see, e.g., \cite[Lemma 1]{AneliaMerhav}) if $a\in\pp{0,1}$, then $\frac{1}{2}\min\ppp{1,aM}\leq1-(1-a)^M\leq\min\ppp{1,aM}$. Thus, we have shown that
$$
E_{c,\alpha}^{\text{asynchr}}=\min_{q\in\pp{0,1}}\ppp{D(q||p)+\pp{D(q||\bar{p}^\beta)+H(q)-\alpha}_+}.
$$
\end{proof}

Together, Lemma~\ref{th:1} and Corollary~\ref{cor:1} imply that i.i.d. guesses can perform optimally, both in terms of the expected number of guesses, and in terms of the probability of success. Note that, analogous to Lemma~\ref{th:1}, the optimal distribution in Corollary~\ref{cor:1} depends on the parameter $\alpha$. As a result, asynchronous guessers can perform brute-force attacks as efficiently as synchronized guessers asymptotically, at the expense of universality. \changed{ Finally, it should be emphasized that the optimality of the tilted distribution is a by-product of the asymptotic treatment. Indeed, the results of \cite{boztas2014renyi} show that the optimal distribution in the non-asymptotic regime is not a tilted distribution of $P_X$, but rather a more involved functional of the password distribution. As such, our result does not follow from \cite{boztas2014renyi} in a straightforward way.}

\changed{
\begin{remark}[Probability of failure]
The above results characterized the probability of success of an adversary. In particular we demonstrated that a list size $J$ which is large enough (i.e., such that $\alpha> H(P_X)$) will have an exponent of success probability equal to 1, both under asynchronous and synchronous attacks. Note that this result can be strengthened by looking at the complementary probability of failure ${P}_{f,J}^{\text{synchr}}$ and $P_{f,J}^{\text{asynchr}}$. Again, in the i.i.d. setting, using essentially the same tools as for the probability of success, one can show that the exponents of the probability of failure for both synchronous and asynchronous attacks are the same, equal to $1$ when $\alpha < H(P_X)$, and decreasing as $\alpha$ grows. Similarly, the optimal guessing distribution for asynchronous guessers is a tilted distribution, where the tilt depends on the size of the list. 
\end{remark}
}

\changed{
	\begin{remark}[$J$-Guesswork]
		We briefly mention $J$-Guesswork, a related notion of computational security which was introduced in \cite{bonneau2012science} (denoted $\alpha$-Guesswork). While the usual Guesswork captures the average number of guesses necessary for a system breach, the average $J$-Guesswork, denoted by $\mathbb{E}[G_J(X)]$,  captures the average number of guesses for an adversary which performs no-more than $J$ queries, where $J$ is picked to guarantee a certain probability of success. As such, when $J = \mathcal{X}^{n}$, the $J$-Guesswork reduces to $\mathbb{E}[G(X)]$. We can rewrite the average $J$-Guesswork, as a sum of two terms, i.e.
		\begin{align}
		J \times \mathbb{P}(G(X) > J) + \sum_{i = 1}^J i \cdot P_X(i),
		\end{align}
		where the first term corresponds to the case where the attacker is unsuccessful and stops at $J$ guesses, and the second terms captures his average number of guesses otherwise. In the asymptotic regime where we look at passwords generated from the product distribution $P_{\mathbf{X}}$, and letting $J = \lceil |\mathcal{X}|^{n\alpha}\rceil$, for $\alpha > H(P_X)$, it follows from the remark above that the probability $\mathbb{P}(G(X) > J)$ goes to zero with an exponent $D(P_X^{(\beta)} \| P_X)$ for some unique $\beta \geq 0$, as long as $J$ is large enough (i.e. $\alpha>H(P_X)$). It is then easy to prove that, when $\alpha > H(P_X)$, the average $J$-Guesswork takes exponent
		\begin{align}
		\lim_{n\to\infty} \frac{1}{n} \log \mathbb{E}[G_\beta(\mathbf{X})] = \max \{ \alpha - D(P_X^{(\beta)} \| P_X), H_{1/2}(P_X) \}. \label{eq:tilt_alpha_guess}
		\end{align}
		When $J$ is too small, i.e. when $\alpha < H(P_X)$, then with high probability $G(X) > J$, and therefore the exponent is dominated by $J$ itself, that is $\lim_{n\to\infty} \frac{1}{n} \log \mathbb{E}[G_\beta(\mathbf{X})] = \alpha$. Note that these result hold true in an asynchronous setting as well. Indeed, picking guesses i.i.d. from a distribution $P_{\hat{\mathbf{X}}}$ such that it is equal to the tilted distribution which achieves the maximum in \eqref{eq:tilt_alpha_guess} gives the same exponent of $J$-Guesswork. Therefore, i.i.d. guesses perform asymptotically optimally with respect to $J$-Guesswork as well.
\end{remark}}

\section{Conclusion}

In this paper, we have studied the impact of synchronization on brute-force attacks. We showed that despite a lack of synchronization, and considering a worst-case ordering of the guesses, a randomized guessing strategy allows to achieve the optimal asymptotic performance, both in terms of average number of guesses, and in terms of probability of success after a given number of steps. As such, a solution which prevents repeated queries from a single IP is not enough, and in fact does not guarantee security against even completely asynchronous adversaries. This highlights the importance of password selection, as increasing the guesswork is the key to a secure password-based system.

The insights from these randomized strategies also applies to a single attacker who attempts to breach a system which is likely to be attacked by many other sources of attack. Against such as system, the attacker's strategy is analogous to one of a bot in a botnet. Indeed, since the system is likely to have been targeted by other attacks, the attacker might not want to follow his list in a deterministic way as to avoid repeating guesses from the other attackers. Using a randomized strategy does not hurt the performance asymptotically, but can prevent these repeated guesses.

A natural next step is to consider a distributed brute-force attack which aim at breaching any of $V$ password-secured accounts, rather than being aimed towards a single account. In this case, the computational effort will depend on the number of accounts which are under attack. More precisely, a brute-force attack directed against the accounts of $V$ members might be deemed successful as soon as $U$ of those accounts are compromised for $U \leq V$, regardless of which $U$ are compromised. The case where $U = 1$ corresponds to a classical brute-force attack directed at a multi-user system, while letting $U \geq 2$ models attacks on some distributed storage system, in which, because of the redundancy, some but not all of the servers should be compromised to access content. Additionally, once a system is compromised through sufficently many accounts, it may be much harder to reliably detect or counteract the actions of the attacker, .e.g., in the case of a Byzantine attack (c.f. \cite{Lamport} or \cite{perlman1988network}).
Generalizations of the standard Guesswork problem to this setting have been studied (see \cite{christiansen2015multi}), and establish the gain that arises from considering more accounts, especially when $U$ is much smaller than $V$. However, the optimal strategies in this case rely on a round-robin approach --- assuming the password generation process for all users is identical. More precisely, one should make password guesses to each account in turns, first making a guess for the first account, then the second, and so-on, until eventually successfully guessing the passwords of $U$ of the $V$ accounts. Generalizing such attacks to a distributed asynchronous case is of interest, and the subject of some future work.

\section*{Acknowledgment}

The authors are very grateful to Ken Duffy (Maynooth University) for many fruitful discussions and for providing the Adobe leaked password dataset used to generate Fig. 2.

\bibliographystyle{IEEEtran}
\bibliography{strings}

\appendices
\numberwithin{equation}{section}

\section{Additional Lemmas}

The following lemma relates the position of a sequence $\mathbf{x}_n$ in the optimal list, with the type of that sequence.
\begin{lemma}\label{lem:types}
	Let $\mathbf{x}_n$ be a i.i.d. generated sequence ,and consider the position of $\mathbf{x}_n$ in the optimal list according to $P_{X}$, \emph{i.e.} $G^*(\mathbf{x})$. 
	For a given $\alpha$, we have that $G^*(\mathbf{x}) < \lceil |\mathcal{X}|^\alpha \rceil $ if and only if the sequence $\mathbf{x}$ satisfy $\hat{P}_{\mathbf{x}} \in \mathcal{Q}(\alpha)$, where
	\begin{align}
	\mathcal{Q}(\alpha) &= \left\{Q_{X} : D(Q_{X}\| P_{X}) + H(Q_{X})< D(Q^*_{X}\| P_{X}) + H(Q^*_{X}) \right\},
	\end{align}
	with $Q^*_{X}$ being the solution of the optimization problem:
	\begin{equation}\label{eq:threshold_prob}
	\begin{aligned}
	& \underset{Q_{X}}{\text{minimize}}
	& & D(Q_{X} \| P_{X} ) + H(Q_{X}) \\
	& \text{subject to}
	& & H(Q_{X}) \geq \alpha
	\end{aligned}.
	\end{equation}
\end{lemma}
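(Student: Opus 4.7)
\textbf{Proof plan for Lemma~\ref{lem:types}.}

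The plan is to express the position $G^*(\mathbf{x})$ in the optimal list via the method of types, and then translate the bound $G^*(\mathbf{x}) < \lceil |\mathcal{X}|^{n\alpha}\rceil$ into an inequality on $D(\hat{P}_{\mathbf{x}}\|P_X)+H(\hat{P}_{\mathbf{x}})$ that matches the defining condition of $\mathcal{Q}(\alpha)$.

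First I would observe that, by definition of the optimal guessing strategy (which orders sequences from most to least likely), the rank $G^*(\mathbf{x})$ equals the number of sequences $\mathbf{y}\in\mathcal{X}^n$ with $P_{\mathbf{X}}(\mathbf{y}) \geq P_{\mathbf{X}}(\mathbf{x})$, up to the arbitrary tie-breaking rule. Since for any $\mathbf{y}$ with empirical type $Q$ we have $P_{\mathbf{X}}(\mathbf{y}) = \exp(-n[D(Q\|P_X)+H(Q)])$, the relation $P_{\mathbf{X}}(\mathbf{y})\geq P_{\mathbf{X}}(\mathbf{x})$ is equivalent to $D(Q\|P_X)+H(Q) \leq D(\hat{P}_{\mathbf{x}}\|P_X)+H(\hat{P}_{\mathbf{x}})$. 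Partitioning sequences by type, I can thus write
\begin{equation*}
G^*(\mathbf{x}) \;=\; \sum_{Q \,:\, D(Q\|P_X)+H(Q) \,\leq\, D(\hat{P}_{\mathbf{x}}\|P_X)+H(\hat{P}_{\mathbf{x}})} |T(Q)|.
\end{equation*}

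Next, I would invoke the standard type-counting bounds $|T(Q)| \doteq \exp(nH(Q))$ (up to polynomial factors in $n$) to obtain the exponential estimate
\begin{equation*}
\tfrac{1}{n}\log G^*(\mathbf{x}) \;\to\; \max\bigl\{H(Q) \,:\, D(Q\|P_X)+H(Q) \leq D(\hat{P}_{\mathbf{x}}\|P_X)+H(\hat{P}_{\mathbf{x}})\bigr\}.
\end{equation*}
The threshold condition $G^*(\mathbf{x}) < \lceil |\mathcal{X}|^{n\alpha}\rceil$ then corresponds (modulo polynomial slack) to the right-hand side being strictly less than $\alpha$ in the appropriate logarithmic base.

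The final and most delicate step is a duality argument between the max-entropy problem defining the exponent of $G^*(\mathbf{x})$ and the min-$(D+H)$ problem defining $Q^*$. The ``only if'' direction: if every $Q$ satisfying the inner constraint has $H(Q)<\alpha$, then contrapositively every $Q$ with $H(Q)\geq\alpha$ satisfies $D(Q\|P_X)+H(Q) > D(\hat{P}_{\mathbf{x}}\|P_X)+H(\hat{P}_{\mathbf{x}})$; taking the minimum over $\{Q:H(Q)\geq \alpha\}$ yields $D(Q^*\|P_X)+H(Q^*) > D(\hat{P}_{\mathbf{x}}\|P_X)+H(\hat{P}_{\mathbf{x}})$, i.e., $\hat{P}_{\mathbf{x}}\in\mathcal{Q}(\alpha)$. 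The ``if'' direction is analogous: if $\hat{P}_{\mathbf{x}}\in\mathcal{Q}(\alpha)$, then any $Q$ with $H(Q)\geq\alpha$ lies above the $D+H$-threshold set by $\hat{P}_{\mathbf{x}}$, so no such $Q$ contributes to the max, forcing the exponent to be below $\alpha$.

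The main obstacle I anticipate is reconciling the strict-inequality ``iff'' statement with the fact that the argument is fundamentally exponentially tight (i.e., up to polynomial factors in $n$). In particular, the boundary case $\max H(Q) = \alpha$ requires some care, since the type-counting bounds for $|T(Q)|$ are only accurate up to $\mathrm{poly}(n)$, and the set $\{Q:H(Q)\geq\alpha\}$ need not be attained exactly on the discrete grid $\mathcal{Q}_n$ of empirical types. I would handle this either by explicitly allowing the equivalence to be read in the $\doteq$ sense used throughout the paper, or by absorbing the polynomial slack into an appropriately defined $\alpha$-dependent margin, so that the minimizer $Q^*$ of \eqref{eq:threshold_prob} cleanly separates the two regimes.
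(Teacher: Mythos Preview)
Your proposal is correct and follows essentially the same approach as the paper: both arguments use the identity $P_{\mathbf{X}}(\mathbf{x})=\exp\{-n[D(\hat P_{\mathbf{x}}\|P_X)+H(\hat P_{\mathbf{x}})]\}$, count sequences by type via $|T(Q)|\doteq 2^{nH(Q)}$, and identify the threshold type through the condition $\sup\{H(Q):Q\in\mathcal{Q}(\alpha)\}=\alpha$, which is exactly your max--min duality step. Your write-up is in fact more explicit than the paper's about this duality (the paper compresses it into one sentence), and your caveat that the ``iff'' must be read in the $\doteq$ sense is appropriate---the paper's proof likewise works only up to polynomial factors and does not address the strict-inequality boundary.
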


\begin{proof}
	Recall that $P_{X}(\mathbf{x}) = \exp \{-n \left(D(\hat{P}_{\mathbf{x}}\| P_{X}) + H(\hat{P}_{\mathbf{x}}) \right)\}$, and that the size of the type set $T(\hat{P}_{\mathbf{x}}) \doteq 2^{nH(\hat{P}_{\mathbf{x}})}$. 
	Let $\mathcal{Q}(\alpha)$ be the set of types of the sequences that are in the first $\mathcal{X}^{n\alpha}$ position in the list optimal list. Then, by definition of $\mathcal{Q}(\alpha)$:
	\begin{align}
		\sum_{Q_{X} \in \mathcal{Q}(\alpha)} 2^{nH(Q_X)} \doteq 2^{n\alpha}
	\end{align}
	An application of the method of types gives that the left-hand side evaluates to $2^{n \sup_{Q_{X} \in \mathcal{Q}(\alpha)}H(Q_X)}$, meaning that $\sup_{Q_{X} \in \mathcal{Q}(\alpha)}H(Q_{X}) = \alpha$.
	Thus, the threshold probability is given by the type that solves \eqref{eq:threshold_prob}, and any type that has lower probability must appears before in the list.
\end{proof}

\end{document}